\documentclass[11pt,a4paper]{article}
\usepackage{a4wide}
\usepackage{geometry}
\usepackage{showlabels}
\usepackage[utf8]{inputenc}
\usepackage{enumerate}
\usepackage{amsmath}
\usepackage{amsfonts}
\usepackage{nicefrac}
\usepackage{amssymb}
\usepackage{amsthm}
\usepackage{mathtools}
\usepackage{tikz}
\usetikzlibrary{calc}
\usepackage{pgfplots}
\pgfplotsset{compat=newest}
\usepackage{pgfplotstable}
\usepackage{xcolor}
\usepackage{authblk}
\usepackage{tikz}

\pgfplotsset{
    discard if not/.style 2 args={
        x filter/.code={
            \edef\tempa{\thisrow{#1}}
            \edef\tempb{#2}
            \ifx\tempa\tempb
            \else
                
            \fi
        }
    }
}
\pgfmathdeclarefunction{lg2}{1}{%
    \pgfmathparse{ln(#1)/ln(2)}%
}
\pgfmathdeclarefunction{lg10}{1}{%
    \pgfmathparse{ln(#1)/ln(10)}%
}

\usepackage{tabularx}
\usepackage{booktabs}
\usepackage{paralist}

\usepackage{algorithm2e}
\DontPrintSemicolon

\usepackage[pagebackref,pdfdisplaydoctitle,menucolor=orange!40!black,filecolor=magenta!40!black,urlcolor=blue!40!black,linkcolor=red!40!black,citecolor=green!40!black,colorlinks]{hyperref}

\renewcommand*{\backref}[1]{}
\renewcommand*{\backrefalt}[4]{%
	\ifcase #1%
	\or [p.~#2.]%
	\else [pp.~#2.]%
	\fi%
}
\usepackage[square,numbers]{natbib}
\usepackage[nameinlink,sort&compress,capitalize]{cleveref}
\bibliographystyle{plainnat}

\newcommand{\ExternalLink}{%
    \tikz[x=1.2ex, y=1.2ex, baseline=-0.05ex]{%
        \begin{scope}[x=1ex, y=1ex]
            \clip (-0.1,-0.1) 
                --++ (-0, 1.2) 
                --++ (0.6, 0) 
                --++ (0, -0.6) 
                --++ (0.6, 0) 
                --++ (0, -1);
            \path[draw, 
                line width = 0.5, 
                rounded corners=0.5] 
                (0,0) rectangle (1,1);
        \end{scope}
        \path[draw, line width = 0.5] (0.5, 0.5) 
            -- (1, 1);
        \path[draw, line width = 0.5] (0.6, 1) 
            -- (1, 1) -- (1, 0.6);
        }
    }

\usepackage[backgroundcolor=gray!10,textsize=footnotesize]{todonotes}

\newtheorem{theorem}{Theorem}
\newtheorem{lemma}[theorem]{Lemma}

\newtheorem{corollary}[theorem]{Corollary}

\theoremstyle{definition}

\newtheorem{construction}[theorem]{Construction}

\crefname{rrule}{Rule}{Rules}

\newcommand{\prob}[1]{\textnormal{\textsc{#1}}}

\newcommand{\boxproblem}[4]{
    \begin{center}   
        \fbox{~\begin{minipage}{.97\textwidth}
            \vspace{2pt} 
            \noindent
            \normalsize\textsc{#1}
            \vspace{1pt}

            \setlength{\tabcolsep}{3pt}
            \renewcommand{\arraystretch}{1.0}
            \begin{tabularx}{\textwidth}{@{}lX@{}}
                \normalsize\textbf{Input:}       & \normalsize#2 \\
                \normalsize\textbf{Question:}    & \normalsize#3 \\
                \normalsize\textbf{Parameter: }  & \normalsize#4
            \end{tabularx}
        \end{minipage}}
    \end{center}
}

\DeclarePairedDelimiterX{\abs}[1]{\lvert}{\rvert}{#1}
\DeclarePairedDelimiterX{\norm}[1]{\lVert}{\rVert}{#1}
\DeclarePairedDelimiterX{\ceil}[1]{\lceil}{\rceil}{#1}

\newcommand{\NN}{\mathbb{N}}

\newcommand{\XP}{\ensuremath{\textnormal{XP}}}
\newcommand{\Wone}{\ensuremath{\textnormal{W[1]}}}
\newcommand{\FPT}{\ensuremath{\textnormal{FPT}}}
\newcommand{\NP}{\ensuremath{\textnormal{NP}}}
\newcommand{\coNP}{\ensuremath{\textnormal{coNP}}}
\newcommand{\poly}{\ensuremath{\textnormal{poly}}}
\newcommand{\NPincoNPslashpoly}{\ensuremath{\NP\subseteq \coNP/\poly}}

\newcommand{\bigO}{\mathcal{O}}

\newcommand{\oneto}[1]{[ #1 ]} 

\DeclareMathOperator{\vc}{vc}
\DeclareMathOperator{\fvs}{fvs}
\DeclareMathOperator{\fvn}{fvs}
\DeclareMathOperator{\tw}{tw}

\DeclareMathOperator{\cvd}{cd}

\newcommand{\calF}{\mathcal{F}}

\title{\Large\bf Vertex Cover and Feedback Vertex Set Above and Below Structural Guarantees\footnote{This work was initiated at the research retreat of the Algorithmics and Computational Complexity group, TU Berlin, in 2021.}}

\author{Leon Kellerhals}
\author{Tomohiro Koana\thanks{Supported by the DFG Project DiPa, NI 369/21.}}
\author{Pascal Kunz\thanks{Supported by the DFG Research Training Group 2434 ``Facets of Complexity''.}}

\affil{\small
  Technische Universit\"at Berlin, Faculty~IV, Institute of Software Engineering and Theoretical Computer Science, Algorithmics and Computational Complexity.\protect\\
  \texttt{\{leon.kellerhals,tomohiro.koana,p.kunz.1\}@tu-berlin.de}}

\date{}

\begin{document}

\maketitle

\begin{abstract}
	\noindent
	\textsc{Vertex Cover} parameterized by the solution size~$k$ is the quintessential fixed-parameter tractable problem.
	FPT algorithms are most interesting when the parameter is small.
	Several lower bounds on $k$ are well-known, such as the maximum size of a matching.
	This has led to a line of research on parameterizations of \textsc{Vertex Cover} by the difference of the solution size $k$ and a lower bound.
	The most prominent cases for such lower bounds for which the problem is FPT are the matching number or the optimal fractional LP solution.
	We investigate parameterizations by the difference between $k$ and other graph parameters including the feedback vertex number, the degeneracy, cluster deletion number, and treewidth with the goal of finding the border of fixed-parameter tractability for said difference parameterizations.
	We also consider similar parameterizations of the \textsc{Feedback Vertex Set} problem.
\end{abstract}

\section{Introduction}

Given an undirected graph $G$ and an integer $k$, the \textsc{Vertex Cover} problem asks whether there is a set of at most $k$ vertices that contains at least one endpoint of each edge.
\textsc{Vertex Cover} is arguably the most well-studied problem in parameterized complexity.
After significant efforts, the state-of-the-art FPT algorithm parameterized by the solution size $k$ runs in time~$\bigO(1.2738^k + kn)$ \cite{DBLP:journals/tcs/ChenKX10}, where $n$ is the number of vertices.
Very recently, Harris and Narayanaswamy~\cite{Harris2022} have announced an even faster algorithm with running time~$\bigO(1.2540^k \cdot n^{\bigO(1)})$

The aforementioned FPT algorithm is only useful when the parameter~$k$ is small.
In practice, however, the minimum vertex cover size is often large.
For this reason, many recent studies look into \textsc{Vertex Cover} where the parameterization is $k$ minus a lower bound on $k$.
For instance, if the maximum matching size $m$ is greater than $k$, then this would be a trivial no instance, since a vertex cover must contain at least one endpoint of each edge in any matching.
This naturally gives rise to the ``above guarantee''~\cite{DBLP:journals/jal/MahajanR99} parameter~$k - m$.
\textsc{Vertex Cover} is FPT with respect to $k - m$ \cite{DBLP:journals/jcss/RazgonO09}.
It has also been shown that \textsc{Vertex Cover} is FPT for even smaller above guarantee parameters such as $k - r$ \cite{DBLP:journals/toct/CyganPPW13,DBLP:journals/talg/LokshtanovNRRS14} and $k - 2 r + m$ \cite{Garg2016}, where $r$ is the optimal LP relaxation value of \textsc{Vertex Cover}.
Kernelization with respect to these parameters has also been studied \cite{Kratsch2018,DBLP:journals/jacm/KratschW20}.

This work considers above guarantee parameterizations of \textsc{Vertex Cover} where the lower bounds are structural parameters not related to the matching number, such as feedback vertex number, degeneracy, and cluster vertex deletion number.
We also study similar above guarantee parameterizations of the \textsc{Feedback Vertex Set} problem:
Given a graph $G$ and an integer~$k$, it asks whether there is a set of at most $k$ vertices whose deletion from $G$ results in a forest.
In this work, we do not deeply look into the ``below guarantee'' parameterization (where the number of vertices $n$ is the most obvious upper bound) because \textsc{Vertex Cover} and \textsc{Feedback Vertex Set} are known to be W[1]-hard when parameterized by $n - k$.\footnote{These two parameterized problem are essentially the \Wone-hard problems \textsc{Independent Set}~\cite{Downey1995} and \textsc{Maximum Induced Forest}~\cite{Khot2002}, respectively.}

\paragraph*{Motivation.}
We believe that FPT algorithms with above guarantee parameterizations may help explain the efficiency of some branching algorithms in practice.
Consider an instance $I = (G, k)$ of \textsc{Vertex Cover} for a complete graph $G$.
This instance is trivial to solve: $I$ has a solution if and only if $k \ge n - 1$.
This triviality, however, is overlooked by the worse-case running time bound of FPT algorithms parameterized by the solution size $k$ or the aforementioned smaller parameters $k - m$, $k - \ell$, or $k - 2 \ell + m$, all of which amount to $n / 2$ (for even $n$).
Now consider another above guarantee parameter~$k - (\omega - 1)$, where $\omega$ is the maximum clique size.
(Note that for a clique $C$, a vertex cover contains at least $|C| - 1$ vertices of $C$.)
As we will see, \textsc{Vertex Cover} is FPT parameterized by this parameter (even if a maximum clique is not given).
This gives us a theoretical reasoning as to why \textsc{Vertex Cover} is indeed trivial to solve on complete graphs.

We also believe that above structural guarantee parameterizations are of theoretical interest, because they are closely related to identifying graphs in which two of its parameters coincide.
Structural characterizations of such graphs have been extensively studied where the two parameters are maximum matching size and minimum vertex cover size \cite{LP86}, maximum matching size and minimum edge dominating set size \cite{LP1984}, maximum matching size and induced matching size \cite{DBLP:journals/dm/CameronW05a}, maximum independent set size and minimum dominating set \cite{Plu1970,Plu1993}, and minimum dominating set and minimum independent dominating set \cite{DBLP:journals/dm/GoddardH13}.
The corresponding computational complexity questions, i.e.\ whether these graphs can be recognized in polynomial time, have been studied as well~\cite{CHVATAL1993179,DBLP:journals/ipl/DemangeE14,DBLP:journals/disopt/DenizEHMS17,DBLP:journals/tcs/DuarteJPRS15,Gavril1997,DBLP:journals/networks/SankaranarayanaS92}.

Finally, our parameterizations can be seen as ``dual'' of parameters studied in literature.
There has been significant work (especially in the context of kernelization) on \textsc{Vertex Cover} parameterized by structural parameters smaller than the solution size~$k$ such as feedback vertex set number~\cite{DBLP:conf/icalp/BougeretJS20,DBLP:journals/mst/CyganLPPS14,DBLP:conf/stacs/HolsKP20,DBLP:journals/mst/JansenB13,DBLP:journals/mst/Majumdar0018}.

\paragraph*{Our contribution.}

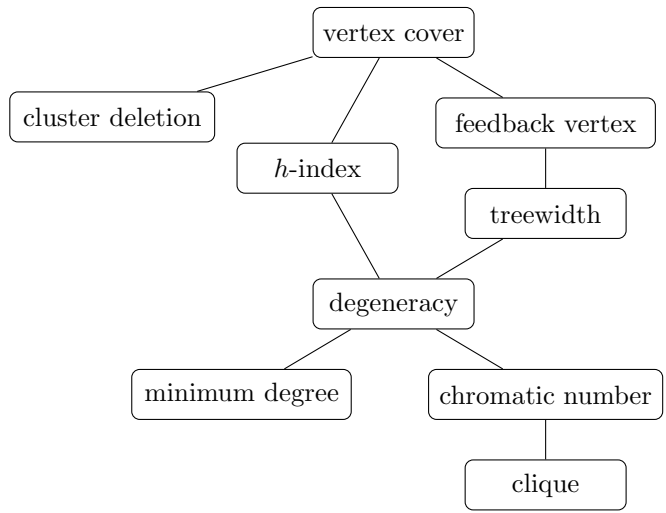
\begin{figure}
	\centering
	\begin{tikzpicture}[yscale=.8]
		\tikzset{
			param/.style={draw, fill=white, rectangle, rounded corners=3, font=\small, minimum width=5.5em, minimum height=4ex},
		}
		\node[param, inner sep=0pt,inner ysep=+0pt] at (0, 1.5) (vc) {vertex cover};
		\node[param, inner sep=0pt,inner ysep=+0pt] at (-1, -.75) (hi) {$h$-index};
		\node[param, inner sep=0pt,inner ysep=+0pt,minimum width=7.5em] at (2, 0) (fv) {feedback vertex};
		\node[param, inner sep=0pt,inner ysep=+0pt] at (2, -1.5) (tw) {treewidth};
		\node[param, inner sep=0pt,inner ysep=+0pt] at (0, -3) (dg) {degeneracy};
		\node[param, inner sep=0pt,inner ysep=+0pt,minimum width=8em] at (2, -4.5) (ch) {chromatic number};
		\node[param, inner sep=0pt,inner ysep=+0pt] at (2, -6) (cl) {clique};
		\node[param, inner sep=0pt,inner ysep=+0pt,minimum width=7.5em] at (-2, -4.5) (md) {minimum degree};
		\node[param, inner sep=0pt,inner ysep=+0pt,minimum width=7em] at (-3.7, 0.1) (cv) {cluster deletion};
		
		\draw (vc) -- (hi) -- (dg) -- (md);
		\draw (vc) -- (fv) -- (tw) -- (dg) -- (ch) -- (cl);
		\draw (vc) -- (cv);
	\end{tikzpicture}
	\caption{A Hasse diagram of graph parameters. There is a line between two parameters $p$ (above) and $q$ (below) if $p + 1 \ge q$ holds for any graph $G$.}
	\label{fig:hier}
\end{figure}
In \Cref{sec:vc-above-h}, we show that \textsc{Vertex Cover} is FPT when parameterized by $k - h$ for the $h$-index $h$.
This parameter is greater than or equal to many graph parameters such as degeneracy $d$, chromatic number $\chi$, and clique number $\omega$ (See \Cref{fig:hier}).
Thus, \textsc{Vertex Cover} is FPT for $k - d$, $k - \chi$, and $k - \omega$ as well.
Using a similar approach, we show in \Cref{sec:fvs-above-dg} that \textsc{Feedback Vertex Set} is FPT for $k - d$.
We also show that on planar graphs, fixed-parameter tractability of \textsc{Vertex Cover} with respect to $k - d$ can be strengthened: \textsc{Vertex Cover} is FPT parameterized by~$k - \tw$ for the treewidth $\tw$ (\Cref{sec:vc-above-tw}).
In the remaining sections, we prove hardness results.
In \Cref{sec:vc-above-cl}, we show that \textsc{Vertex Cover} admits no kernel of size polynomial in $k - \delta$  ($\delta$ is the minimum degree) and neither \textsc{Vertex Cover} nor \textsc{Feedback Vertex Set} admit a kernel of size polynomial in $k - \omega$.
We also show that \textsc{Vertex Cover} is \Wone-hard for $k - \fvn$ (\cref{sec:vc-above-fv}) and NP-hard for $k - \cvd = 0$ (\Cref{sec:vc-above-cd}), where $\fvn$ and $\cvd$ are the size of a minimum feedback vertex set and of a minimum cluster deletion set, respectively.
Finally, we prove that \textsc{Feedback Vertex Set} NP-hard for $\vc-k=2$ in \Cref{sec:fvs-below-vc} where $\vc$ is the size of a minimum vertex cover.

\section{Preliminaries}
\paragraph{Graphs.}
For standard graph terminology, we refer to Diestel~\cite{Diestel2017}.
All graphs we consider are finite, undirected, and loopless.
We call a function $p$ that maps any graph~$G$ to an integer $p(G)$ a \emph{graph parameter}.
In the following, we will define several graph parameters that are of interest in this work.
Let $G$ be a graph.
We denote the vertex set and edge set of $G$ by $V(G)$ and $E(G)$, respectively.
We denote the \emph{minimum degree} of $G$ by $\delta(G)$ and the \emph{maximum degree} by $\Delta(G)$.
The \emph{vertex cover number} $\vc(G)$ of $G$ is the size of a smallest set $X\subseteq V(G)$ such that~$G-X$ is edgeless.
The \emph{feedback vertex number}~$\fvs(G)$ of $G$ is the size of a smallest set~$X\subseteq V(G)$ such that $G-X$ is acyclic.
The \emph{$h$-index} $h(G)$ of a graph $G$ is the largest integer $k$ such that $G$ contains at least $k$ vertices each of degree at least $k$.
The \emph{degeneracy} of a graph $G$ is $d(G) \coloneqq \max_{V' \subseteq V(G)} \delta(G[V'])$.
A subset $V'$ of~$V(G)$ that maximizes $\delta(G[V'])$ is a \emph{core} of $G$.
The \emph{clique number}~$\omega(G)$ of~$G$ is the size of a largest clique in~$G$.
The \emph{chromatic number} $\chi(G)$ is the minimum integer $k$ such that $G$ can be properly $k$-colored.
The \emph{cluster deletion number}~$\cvd(G)$ of~$G$ is the size of a smallest set $X\subseteq V(G)$ such that $G-X$ does not contain a $P_3$ as an induced subgraph.
A pair $(T=(W,F),\beta)$ where $T$ is a tree and $\beta \colon W \rightarrow 2^{V(G)}$ is a \emph{tree decomposition} of $G$ if
\begin{inparaenum}[(i)]
	\item $\bigcup_{w \in W} \beta(w) = V(G)$,
	\item $\{ w \in W \mid v \in \beta(w) \}$ induces a connected subgraph of $T$ for all $v \in V(G)$ and
	\item for all $\{u,v\} \in E(G)$, there exists a $w\in W$ with $u,v \in \beta(w)$.
\end{inparaenum}
The \emph{width} of $(T=(W,F),\beta)$ is $\max_{w\in W} |\beta(w)|-1$.
The \emph{treewidth} $\tw(G)$ of $G$ is the minimum width over all tree decompositions of $G$.

If $p$ and $q$ are graph parameters, then we will say that $p$ is \emph{smaller} than~$q$ (and write~$p\preceq q$), if there is a constant $c$ such that $p(G) \leq q(G) + c$ for all graphs $G$.
This differs from the way the boundedness relation between graph parameters is usually defined~\cite{Jansen2013,Sorge2019}, but this stricter definition is necessary in the context of difference parameterizations.
This is because with this stricter definition the following is true (and easy to prove):
If $p,q,r$ are graph parameters such that $p \preceq q \preceq r$, then $r -q \preceq r- p$.
\Cref{fig:hier} depicts the graph parameters relevant to this work and the relationships between them.

\paragraph{Parameterized complexity.}
A \emph{parameterized problem} is a pair $(L,\kappa)$ where $L\subseteq \Sigma^*$ for a finite alphabet $\Sigma$ and $\kappa \colon \Sigma^* \rightarrow \NN$ is the parameter.
The problem is \emph{fixed-parameter tractable} (FPT) if it can be decided by an algorithm with running time $\bigO(f(\kappa(I)) \cdot |I|^c)$ where $I\in \Sigma^*$, $f$ is a computable function and $c$ is a constant.
Note that if $(L,\kappa)$ is FPT and $\kappa \preceq \kappa'$, then $(L,\kappa')$ is also FPT.
A \emph{kernel} for this problem is a polynomial-time algorithm that takes the instance $I$ and outputs a second instance $I'$ such that
\begin{inparaenum}[(i)]
	\item $I \in (L, \kappa) \iff I' \in (L, \kappa)$ and
	\item $|I'| \leq f(\kappa(I))$ for a computable function $f$.
\end{inparaenum}
The \emph{size} of the kernel is $f$.
There is a hierarchy of computational complexity classes for parameterized problems: $\mathrm{FPT} \subseteq \Wone \subseteq \cdots \subseteq \mathrm{XP}$.
To show that a parameterized problem~$(L,\kappa)$ is (presumably) not FPT one may use a \emph{parameterized reduction} from a~$\Wone$-hard problem to~$L$.
A parameterized reduction from a parameterized problem~$(L,\kappa)$ to another parameterized problem~$(L',\kappa')$ is a function that acts as follows:
For computable functions~$f$ and~$g$, given an instance~$I$ of~$L$, it computes in $f(\kappa(I)) \cdot |I|^{O(1)}$ time an instance~$I'$ of~$L'$ so that~$I \in (L,\kappa) \iff I' \in (L',\kappa')$ and~$\kappa(I') \le g(\kappa(I))$.
For more details on parameterized algorithms and complexity, we refer to the standard literature~\cite{bluebook,Downey2013,Flum2006}.

\paragraph{Problem definitions.}
We are interested in above guarantee parameterizations of \textsc{Vertex Cover} of the following form.
Let $p \preceq \vc$ be a graph parameter.
Then, we define:
\boxproblem{Vertex Cover above $p$}
{A graph~$G$ and an integer~$k$.}
{Does~$G$ contain a vertex cover of order at most~$k$?}
{$\ell \coloneqq k- p(G)$.}

Similarly, we also consider above (below) guarantee parameterizations of \textsc{Feedback Vertex Set}.
Now, let $p$ be a graph parameter with $p \preceq \fvs$ ($\fvs \preceq p$).
We consider the following problem:
\boxproblem{Feedback Vertex Set above (below) $p$}
{A graph~$G$ and an integer~$k$.}
{Does~$G$ contain a feedback vertex set of order at most~$k$?}
{$\ell \coloneqq k- p(G)$ ($\ell \coloneqq p(G) - k$).}

\section{Vertex Cover above \texorpdfstring{$h$}{h}-Index}
\label{sec:vc-above-h}
We start by proving that \textsc{Vertex Cover} is \FPT{} when parameterized by the difference between $k$ and the $h$-index of the graph.
Recall that the state-of-the-art algorithm for \textsc{Vertex Cover} parameterized by the solution size~$k$ has running time $\bigO(1.274^k + kn)$~\cite{DBLP:journals/tcs/ChenKX10}. 

\begin{theorem}
	\textsc{Vertex Cover above $h$-Index} is \FPT.
\end{theorem}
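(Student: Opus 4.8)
The plan is to reduce, in FPT time parameterized by $\ell = k - h$, to the standard \textsc{Vertex Cover} problem parameterized by solution size, and then invoke the $\bigO(1.274^k + kn)$ algorithm of Chen, Kanj, and Xia. The key observation is that the $h$-index gives us a natural bipartition of the vertex set: let $H$ be a set of $h$ vertices of degree at least $h$ (a witness for the $h$-index), and let $L = V(G) \setminus H$. By maximality of $h$, every vertex in $L$ has degree at most $h$ in $G$; in particular, the graph $G - H$ has maximum degree at most $h$. The idea is that the ``expensive'' part of any vertex cover lives on or near $H$, which has only $h$ vertices, while the rest is cheap.

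First I would guess which vertices of $H$ belong to the solution: branch over all $2^h$ subsets $S \subseteq H$, put $S$ into the cover, and delete $S$ from the graph. For the branch to be consistent we must additionally cover every edge with both endpoints in $H \setminus S$, which is impossible unless $H \setminus S$ is an independent set — so we may discard any branch where $H \setminus S$ induces an edge. Here is where I would be careful about the budget: since a vertex cover must pick at least one endpoint of every edge, and $H$ itself need not contain many edges, this naive $2^h$ branching is not obviously FPT in $\ell$ rather than $h$. The fix is to branch more cleverly: first compute (or lower-bound) a maximum matching $M$ inside $G[H]$; a vertex cover must spend at least $|M|$ vertices inside $H$, so if $|M|$ is large relative to $\ell$ we can already bound things, and otherwise $H$ has a vertex cover of size at most... — the cleaner route is to observe that after branching on $S$, the number of vertices of $H$ we have already committed must be at most $k$, and the residual instance $G - S$ has all its remaining high-degree structure destroyed. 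Concretely, in the residual graph every vertex of $H \setminus S$ is isolated-from-$H$ (no edges inside), so every remaining edge has at least one endpoint in $L$, and $G[L]$ has maximum degree at most $h$.

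The main step, and the one I expect to be the real obstacle, is bounding the residual solution size by a function of $\ell$ alone. On the residual graph $G' = G - S$ with remaining budget $k' = k - |S|$, the relevant lower bound to exploit is again the $h$-index-type guarantee: one shows that $\vc(G') \le k'$ forces $k' - \vc(G')$ to be small, or rather that we only need to branch when $k' \le h + \ell + \bigO(1) = \bigO(\ell)$ after accounting for the vertices already spent. The trick is a counting/crown-type argument: vertices in $L$ of degree $\le h$ that are ``far'' from $H$ can be handled by a polynomial-time reduction (e.g.\ an LP-based or crown decomposition preprocessing that removes all but $\bigO(\ell)$ vertices outside a $\vc$-optimal core), reducing the residual budget to $\bigO(\ell)$; then the Chen–Kanj–Xia algorithm runs in $1.274^{\bigO(\ell)} \cdot n^{\bigO(1)}$ time. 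Summing over the $\le 2^h$ branches — and using $h \le k$ so that a clean FPT-in-$\ell$ bound survives only after the matching-lower-bound bookkeeping shows the relevant branches number $2^{\bigO(\ell)}$, not $2^h$ — gives the claimed fixed-parameter tractability. I would double-check that the preprocessing and the branching interleave correctly so that the final parameter in the exponent is $\bigO(\ell)$ and not $\bigO(h)$; that interface is where a careless argument would silently slip from $\ell$ to $h$.
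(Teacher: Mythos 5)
Your high-level instinct --- use the $h$ vertices of degree at least $h$ to force the residual budget down to $\ell = k - h$ and then invoke the Chen--Kanj--Xia algorithm --- is the right one, but the branching structure you chose does not realize it, and you acknowledge the hole without ever closing it. Branching over all $2^h$ subsets $S \subseteq H$ is fatal: $h$ is not bounded by any function of $\ell$ (consider $\ell = 0$ with $h$ close to $n/2$), so $2^h$ branches already destroys fixed-parameter tractability in $\ell$ no matter how fast each branch is solved. None of the repairs you gesture at works. The maximum-matching-in-$G[H]$ lower bound is vacuous when $G[H]$ is edgeless, which is entirely possible even for large $h$ (the high-degree vertices may send all their edges into $L$), yet the number of subsets remains $2^h$. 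Crown or LP preprocessing shrinks the instance in terms of the solution size $k$, not in terms of $\ell$, so it does not rescue the branch count either. Every point where the argument must convert ``$h$'' into ``$\ell$'' is deferred to ``the bookkeeping shows'' or ``I would double-check'' --- that conversion is precisely the content of the theorem, and it is missing.

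The actual fix is far simpler and avoids subset enumeration entirely, which is what the paper does. Branch into only $h+1$ cases: either the cover contains \emph{all} of $v_1,\dots,v_h$ (one branch: solve $(G - \{v_1,\dots,v_h\},\, k-h)$, whose budget is exactly $\ell$), or for some specific $i$ the cover excludes $v_i$ (one branch per $i$: then all of $N(v_i)$ is forced into the cover, and since $\deg(v_i) \ge h$ the residual instance $(G - N(v_i),\, k - \abs{N(v_i)})$ has budget at most $k - h = \ell$). These cases are exhaustive, each is an ordinary \textsc{Vertex Cover} instance with solution-size parameter at most $\ell$, and there are only $h+1 \le n+1$ of them, giving total running time $\bigO(1.274^{k-h} h + (k-h) h n)$. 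The insight you missed is that one never needs to know \emph{which} subset of $H$ lies in the cover --- only whether it is all of $H$ or not, because a single excluded high-degree vertex already forces at least $h$ other vertices in.
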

\begin{proof}
	Let $(G,k)$ be an instance of \textsc{Vertex Cover} where $G$ is a graph with an $h$-index of $h$.
	Let $v_1,\ldots,v_h \in V(G)$ with $\deg(v_i) \geq h$.
	We branch into the following $h+1$ cases:
	\begin{enumerate}[(1)]
		\item The solution contains all of the vertices $v_1,\ldots,v_h$.
		Hence, we test the instance $(G-\{v_1,\ldots,v_h\},k-h)$ in time $\bigO(1.274^{k-h} + (k-h)n)$.
		\item The solution does not contain $v_i$ for some $i\in\{1,\ldots,h\}$.
		Then, the solution must contain all of $v_i$'s neighbors.
		Hence, we test the instance $(G - N(v_i),k-|N(v_i)|)$.
		Since $|N(v_i)| \geq h$, this is possible in time $\bigO(1.274^{k-h} + (k-h)n)$.
	\end{enumerate}
	In all, we get a running time of $\bigO(1.274^{k-h}h + (k-h)hn)$.
\end{proof}

This algorithm can also be used to obtain a Turing kernelization (cf.~\cite[Ch.~22]{Fomin2019}) by simply computing a kernel for each of the $h+1$ instances of \textsc{Vertex Cover} parameterized by the solution size that the algorithm branches into.

\section{Feedback Vertex Set above Degeneracy}
\label{sec:fvs-above-dg}
A similar approach to the one used in the previous section, branching once to lower $k$ and then applying a known algorithm for the standard parameterization, can also be employed to show that \textsc{Feedback Vertex Set above Degeneracy} is \FPT{}.
The fastest presently known deterministic algorithm for the standard parameterization of \textsc{Feedback Vertex Set} runs in time~$\bigO(3.460^k \cdot n)$~\cite{Iwata2021}.

\begin{theorem}
	\textsc{Feedback Vertex Set above Degeneracy} is \FPT{}.
\end{theorem}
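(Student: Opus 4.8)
The plan is to reuse the ``branch once, then invoke the standard algorithm'' scheme of the previous section, after first relating $\fvs$ to the degeneracy. If $S$ is a feedback vertex set of $G$, then $G-S$ is a forest and thus has degeneracy at most $1$; since deleting a single vertex lowers the degeneracy by at most one, induction gives $d(G)\le d(G-S)+|S|\le 1+|S|$, so $\fvs(G)\ge d(G)-1$. In particular $d\preceq\fvs$, so the problem is well defined, and if $k<d(G)-1$ we may immediately answer \no{}. Hence we may assume $k\ge d(G)-1$, i.e.\ $k\le(d(G)-1)+(\ell+1)$.

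Write $d:=d(G)$ and compute in polynomial time a core $V'$, so that $\delta(G[V'])\ge d$. For every feedback vertex set $S$ of $G$, the set $S\cap V'$ is a feedback vertex set of $G[V']$, so the bound above yields $|S\cap V'|\ge d-1$. The single branching step is meant to produce a family $\mathcal Z$ of subsets of $V'$, each of size at least $d-1$, such that every feedback vertex set of $G$ of order at most $k$ contains some $Z\in\mathcal Z$. For each $Z\in\mathcal Z$ we then recurse on the instance $(G-Z,\,k-|Z|)$, whose solution-size parameter satisfies $k-|Z|\le k-(d-1)=\ell+1$, and decide it with the $\bigO(3.460^{k}\cdot n)$-time algorithm for the standard parameterization, which here runs in time $\bigO(3.460^{\ell+1}\cdot n)$. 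Correctness is immediate in both directions: $S'\cup Z$ is a feedback vertex set of $G$ whenever $S'$ is one of $G-Z$, and conversely any feedback vertex set of $G$ of order at most $k$ contains some $Z\in\mathcal Z$ and hence witnesses a solution in the corresponding branch. If $\mathcal Z$ is computable in $g(\ell)\cdot n^{\bigO(1)}$ time, the total running time becomes $g(\ell)\cdot 3.460^{\ell+1}\cdot n^{\bigO(1)}$, which is \FPT{} in $\ell$. Two easy cases are folded in: if $\fvs(G[V'])>k$ we answer \no{} (then $\fvs(G)\ge\fvs(G[V'])>k$), and if $d$ is bounded by a constant we simply run the standard algorithm, since then $k\le\ell+\bigO(1)$.

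The heart of the argument, and the step I expect to be the main obstacle, is producing such a small family $\mathcal Z$. This is really a structural claim about a graph $H=G[V']$ of minimum degree at least $d$: its induced forests that miss only few vertices (those of order within $\ell+1$ of a maximum induced forest of $H$), though possibly numerous — $K_{d+1}$ has $\binom{d+1}{2}$ maximum induced forests, and $K_{d,d}$ has order $d^{\Theta(\ell)}$ induced forests of the relevant size — are nonetheless ``pinned'' by a polynomial number of common large subsets of their complements (for $K_{d,d}$ the $2d+2$ sets consisting of a colour class minus at most one vertex already suffice). The plan is to show that a minimum-degree-$\ge d$ graph whose feedback vertex number is within $\ell+1$ of the guaranteed value $d-1$ has, outside a set of $\bigO(\ell)$ vertices, a very restricted and essentially complete-multipartite-like structure, from which $\mathcal Z$ can be read off directly; an alternative is to build $\mathcal Z$ greedily by a bounded-depth branching on $G[V']$ in which each branch commits $d-\bigO(\ell)$ vertices to the solution. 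Verifying the correctness and the size bound of this family is where the real work lies.
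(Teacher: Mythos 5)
Your setup is sound --- the bound $\fvs(G)\ge d(G)-1$, the reduction to finding a family $\mathcal Z$ of subsets of the core $V'$, each of size at least $d-1$, hitting every small feedback vertex set, and the accounting that each resulting branch has residual budget at most $\ell+1$ all match the paper's strategy. But the proposal has a genuine gap exactly where you say it does: you never construct $\mathcal Z$, and the construction is the entire content of the theorem. The routes you sketch (a structural classification of minimum-degree-$\ge d$ graphs whose feedback vertex number is within $\ell+1$ of $d-1$, or a bounded-depth branching each of whose branches commits $d-\bigO(\ell)$ vertices) are speculative, considerably harder than necessary, and not obviously realizable; as written, the argument is a reduction of the theorem to an unproved structural claim.

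The missing idea is elementary and does not require any extremal analysis. Fix a minimum feedback vertex set $X$ and let $F=V(G)\setminus X$. Either $V'\subseteq X$, or $G[V'\cap F]$ is a \emph{nonempty forest} and hence contains a vertex $u$ of degree at most one in $G[V'\cap F]$. If $u$ is isolated there, then $N_{G[V']}(u)\subseteq X$, a set of size at least $d$; if $u$ is a leaf with surviving neighbour $v$, then $N_{G[V']}(u)\setminus\{v\}\subseteq X$, a set of size at least $d-1$. So the family
\[
\mathcal Z \;=\; \{V'\}\;\cup\;\bigl\{N_{G[V']}(u)\mid u\in V'\bigr\}\;\cup\;\bigl\{N_{G[V']}(u)\setminus\{v\}\mid u\in V',\,v\in N_{G[V']}(u)\bigr\}
\]
has at most $|V'|^2+|V'|+1$ members, is computable in polynomial time, and every feedback vertex set of $G$ contains some member of it. (Note this does not require guessing which vertices of the core survive globally --- only which single core vertex has degree at most one in the surviving part, which is why a quadratic-size family suffices and your worries about $K_{d,d}$ having $d^{\Theta(\ell)}$ relevant induced forests are beside the point.) Two further small remarks: your ``easy case'' of rejecting when $\fvs(G[V'])>k$ is not a polynomial-time test and should be dropped (it is also unnecessary), and the case distinction on whether $d$ is constant is likewise not needed once $\mathcal Z$ is in hand.
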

\begin{proof}
	Let $(G,k)$ be an instance for \textsc{Feedback Vertex Set} where $d$ is the degeneracy of $G$.
	It is well-known that the degeneracy and the core of a graph can be computed in polynomial time by iteratively deleting a minimum-degree vertex and storing the largest degree of a vertex at the time it is deleted.
	We start by computing a core $V'$ of $G$.
	We branch into the following $|V'|^2 + 1$ cases.
	\begin{enumerate}[(1)]
		\item The entire core is contained in the minimum feedback vertex set.
		The core must contain at least $d+1$ vertices.
		Hence, we test the instance $(G-V',k-|V'|)$ in time $\bigO(3.460^{k-|V'|} \cdot n) = \bigO(3.460^{k-d} \cdot n)$.
		\item The entire core is not contained in the minimum feedback vertex set.
		Let $X$ denote the minimum feedback vertex and let $F := V(G) \setminus X$ be the maximum induced forest.
		\begin{enumerate}
			\item If $G[V'\cap F]$ contains an isolated vertex $u$, then all neighbors of~$u$ in~$G[V']$, of which there are at least $d$, must be in $X$.
			Hence, for each $u \in V'$, we test the instance $(G-N_{G[V']}(u),k-\deg_{G[V']}(u))$ in time $\bigO(3.460^{k-\deg_{G[V']}(u)} \cdot n) = \bigO(3.460^{k-d} \cdot n)$.
			\item If $G[V'\cap F]$ does not contain an isolated vertex, it must still contain a leaf $u$, since it is a forest.
			Then, all but one of the neighbors of~$u$ in~$G[V']$ must be in $X$.
			Hence, for each pair $u\in V'$ and~$v \in N_{G[V']}(u)$, we test the instance $(G-(N_{G[V']}(u) \setminus\{v\}),k-\deg_{G[V']}(u) + \nolinebreak 1)$ in time $\bigO(3.460^{k-\deg_{G[V']}(u)+1} \cdot n) = \bigO(3.460^{k-d} \cdot n)$.
		\end{enumerate}
	\end{enumerate}
	In all, this makes for a running time of $\bigO(3.460^{k-d}\cdot n^3)$.
\end{proof}

Like the algorithm in the previous section, this one can also be easily converted to a Turing kernelization.

\section{Vertex Cover above Treewidth}
\label{sec:vc-above-tw}

In this section, we show that on planar graphs \textsc{Vertex Cover} is also FPT with respect to $\ell = k - \tw$, which is smaller than $k - d$.

\begin{theorem}
	\textsc{Vertex Cover above Treewidth} on planar graphs is \FPT.
\end{theorem}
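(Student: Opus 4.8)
I want to show that \textsc{Vertex Cover above Treewidth} restricted to planar graphs is FPT, where the parameter is $\ell = k - \tw(G)$. The natural route mirrors the $h$-index and degeneracy sections: find enough structure to ``pay for'' a branching step that reduces $k$ substantially relative to a lower bound, and then fall back on the standard-parameter algorithm. The key difference here is that treewidth is not a local obstruction (a clique or a high-degree vertex), so the paying argument must be more global. The plan is therefore to exploit planarity directly: by the Lipton--Tarjan-type bounds, a planar graph with large treewidth contains a large grid minor, and more importantly treewidth of a planar graph is bounded in terms of its diameter or, via the outerplanarity hierarchy, in terms of simpler combinatorial quantities.

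**Key steps.** First I would recall the lower bounds on $\vc$ in terms of $\tw$: for any graph $G$ we have $\vc(G) \ge \tw(G)$ is false in general, but $\tw(G) \le \vc(G)$ does hold (a vertex cover of size $t$ yields a tree decomposition of width $t$), so $\ell = k - \tw(G)$ is a genuine above-guarantee parameter whenever $k \ge \vc(G)$. Second, I would invoke the planar excluded-grid theorem in its tight form: a planar graph with treewidth at least $t$ contains a $\frac{t}{c} \times \frac{t}{c}$ grid minor for an absolute constant $c$ (the best constants are due to Gu--Tamaki / Grigoriev). A large grid minor forces $\vc(G)$ to exceed $\tw(G)$ by a controlled amount — more precisely, one wants to argue that if $\ell$ is small then the treewidth cannot be too large relative to the ``trivial'' part of the graph, or alternatively that a large grid subgraph lets us branch. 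Third — and this is the cleanest version — I would use the fact that on planar graphs one can compute an optimal or near-optimal tree decomposition in polynomial time (Seymour--Thomas for branchwidth, hence treewidth up to a constant), fixing $t = \tw(G)$ exactly. Then I branch on a small separator: pick a bag $B$ of the tree decomposition of size $t+1$; for the vertex cover we must include all but at most $\ell' $ of some set that ``witnesses'' the gap, but more directly, decompose the instance along the decomposition and run the standard $\bigO(1.274^{k})$ algorithm after reducing $k$ by at least $t - O(\ell)$ via a structural observation — e.g.\ that in a planar graph the vertices of degree $\le 2$ can be handled by reduction rules, and the remaining graph has $\Omega(\tw)$ edge-disjoint short cycles or a large matching forcing $\vc \ge \tw + $ (something), so that $k - \vc$ small plus $\vc \ge \tw$ gives $k - \tw = O(\ell)$ and we may as well run the $k$-algorithm once $k = \tw + \ell$ is itself bounded... which it is \emph{not} in general. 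So the real mechanism must be branching.

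**The branching mechanism.** Here is the approach I would actually carry out. Compute an optimal tree decomposition of the planar graph $G$ in polynomial time. Exhaustively apply the standard \textsc{Vertex Cover} reduction rules (remove degree-$0$ vertices, take neighbors of degree-$1$ vertices into the cover, handle degree-$2$ vertices by folding). In the reduced instance every vertex has degree at least $3$, so $|E(G)| \ge \tfrac{3}{2}|V(G)|$, while planarity gives $|E(G)| \le 3|V(G)| - 6$; this pins $|V(G)|$ and $|E(G)|$ to within a constant factor, and combined with the treewidth-$\Theta(\sqrt{n})$ relationship for bounded-genus graphs of bounded degree it does \emph{not} bound things — so instead I use that in a reduced (min-degree-$\ge 3$) graph, $\vc(G) \ge \tfrac{1}{2}|V(G)| \ge \tfrac{1}{2}\cdot\tfrac{2}{3}|E(G)| \ge \tfrac{1}{3}\binom{\tw+1}{2}\cdot(\text{const})$ is still too weak. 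The honest plan: find within $G$ a $w \times w$ grid minor with $w = \Omega(\tw)$, realised by $w^2$ vertex-disjoint branch sets; contract each branch set; in the resulting $w\times w$ grid a minimum vertex cover has size $\ge \tfrac{1}{2}w^2 - O(w)$, and pulling back, $\vc(G) \ge \tfrac{1}{2}w^2 - O(w) = \Omega(\tw^2)$. Hence $k \ge \vc(G) = \Omega(\tw^2)$, so $\tw = O(\sqrt{k})$, giving $k - \tw \ge k - O(\sqrt k)$, i.e.\ $k = \ell + \tw \le \ell + O(\sqrt{\ell + \tw})$, which solves to $k = O(\ell^2 + \ell) = O(\ell^2)$; then run the $\bigO(1.274^{k})$ algorithm in time $\bigO(1.274^{O(\ell^2)}\cdot n)$ — this is FPT in $\ell$.

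**The main obstacle.** The delicate point is proving $\vc(G) = \Omega(\tw(G)^2)$ on planar graphs, i.e.\ transferring a grid \emph{minor} lower bound on vertex cover back to $G$ itself; minors do not preserve vertex cover in the easy direction, so I need that the branch sets can be chosen to be \emph{connected and vertex-disjoint} (they can, by definition of a minor model) and that a vertex cover of $G$ projects to something covering the grid after we account for the at most one ``free'' vertex per branch set — a standard but slightly fiddly counting argument. The only other subtlety is making the constants in the planar grid-minor theorem explicit enough to get a clean polynomial bound $k = O(\ell^2)$, and ensuring the reduction rules and the tree-decomposition computation are genuinely polynomial on planar inputs, both of which are classical. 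I expect the grid-minor-to-vertex-cover transfer to be where the real work sits; everything downstream is bookkeeping plus a citation to \cite{DBLP:journals/tcs/ChenKX10}.
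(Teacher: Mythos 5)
Your final mechanism is exactly the paper's proof: the planar excluded-grid theorem yields $\vc(G)=\Omega(\tw(G)^2)$, so either $k<\vc(G)$ and we reject, or $k=O(\ell^2)$ and running the standard $\bigO(1.274^k)$ algorithm is FPT in $\ell$. Two small remarks: exact polynomial-time computation of planar \emph{treewidth} is not known (so, as you note parenthetically, one computes branchwidth via Seymour--Thomas and uses $\tw(G)+1\le\frac{3}{2}\beta$), and the step you single out as the main obstacle is just minor-monotonicity of the vertex cover number, which needs no branch-set counting.
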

\begin{proof}
	Given a planar graph $G$, we compute the branchwidth $\beta$ of~$G$.
	This is possible in polynomial time, because $G$ is planar~\cite{Seymour1994}.
	Moreover,~$\beta \leq \tw(G) + 1 \leq \frac{3}{2} \beta$~\cite[Theorem~5.1]{Robertson1991}.
	Any planar graph with treewidth $w$ contains a $g\times g$-grid with $g\geq\frac{w +4}{6}$ as a minor~\cite[Theorem~6.2]{Robertson1994}.
	Hence, having computed~$\beta$, we know that~$G$ must contain a $g\times g$-grid with $g\geq \frac{\beta+3}{6}$ as a minor.
	Any vertex cover of the $g\times g$-grid must contain at least $\lfloor\frac{g}{2}\rfloor$ in each row, for a total of at least $g \cdot \lfloor\frac{g}{2}\rfloor \geq  \frac{g(g-1)}{2}$ vertices.
	Since~$\vc(H_1) \leq \vc(H_2)$, if~$H_1$ is a minor of $H_2$, it follows that~$\vc(G) \geq \frac{\beta^2-9}{72}\eqqcolon r$.
	Hence, if~$k < r$, we may reject the input.
	Otherwise, $\ell = k - \tw(G) \geq k - \frac{3}{2} \beta \geq k - \frac{3}{2}\sqrt{72k - 9}$.
	This means that $\ell$ is bounded from below by a function in $k$ and, therefore, fixed-parameter tractability with respect to $k$ implies fixed-parameter tractability with respect to $\ell$.
\end{proof}

This algorithm relies on two properties of planar graphs:
\begin{inparaenum}[(i)]
	\item large treewidth guarantees the existence of a $g\times g$-grid where $g \in \Omega(\tw^{1/2 + \varepsilon})$ for $\varepsilon > 0$ and
	\item branchwidth can be computed in polynomial time on planar graphs.
\end{inparaenum}
In any graph class that excludes a minor, (i)~still holds true~\cite{Demaine2008}.
Although it is not clear that (ii) can be generalized, we remark that a constant approximation algorithm is known for graphs excluding single-crossing graphs as minors~\cite{Demaine2004}.
In fact, our result can be extended to any class of graphs that do not contain a single-crossing graph as a minor.

We leave open whether or not \textsc{Vertex Cover above Treewidth} is FPT on graph classes that exclude a minor (other than planar graphs) or even on arbitrary graphs.

\section{Kernelization Lower Bounds}
\label{sec:nopk}
\label{sec:vc-above-cl}

In this section we show that, while there is a Turing kernel when parameterized above $h$-index, \textsc{Vertex Cover} presumably does not admit a polynomial kernel when parameterized above the minimum degree or the clique number.

\begin{theorem}
	\label{thm:vc-above-clique-nopk}
	\textsc{Vertex Cover above Minimum Degree} and \textsc{Vertex Cover above Clique Number} do not admit a polynomial kernel unless \NPincoNPslashpoly{}.
\end{theorem}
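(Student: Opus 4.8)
The plan is to use the standard machinery of cross-composition (or equivalently OR-composition) to rule out polynomial kernels under the assumption $\NP\not\subseteq\coNP/\poly$. Recall that for \textsc{Vertex Cover above Minimum Degree} the parameter is $\ell = k - \delta(G)$, and for \textsc{Vertex Cover above Clique Number} it is $\ell = k - \omega(G)$. Since $\delta \preceq \omega$ is false in general but $\omega - 1 \le \delta$ can fail too, I would handle the two parameters with a single construction that is engineered so that both $\delta(G)$ and $\omega(G)$ are large (roughly equal to the number of vertices minus a small slack), so that a bound of the form ``$\ell$ is polynomially bounded'' would force a polynomial kernel in each case simultaneously.

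First, I would start from an \NP-hard source problem whose instances can be padded to share the same instance size — the most convenient choice is \textsc{Vertex Cover} (or \textsc{Clique}) itself on instances with a fixed number $n$ of vertices and a fixed target value. Given $t$ such instances $(G_1,k_1),\dots,(G_t,k_t)$ with $G_i$ on vertex set of size $n$ and the same target $k_i = k$, I would build a single graph $G^\star$ as follows: take the disjoint union of the $G_i$, and then add a large clique $K$ of size roughly $N$ (a polynomial in $n$ and $t$ to be fixed) that is made fully adjacent to \emph{every} vertex of $\bigcup_i V(G_i)$, except that we must be careful not to destroy the ``OR'' structure. The point of $K$ is twofold: it forces $\omega(G^\star)$ and $\delta(G^\star)$ to be large (on the order of $N$), so that the new target $k^\star$ can also be chosen large while keeping $\ell^\star = k^\star - \omega(G^\star)$ and $k^\star - \delta(G^\star)$ small — ideally bounded by a polynomial in $n+\log t$ rather than in $n+t$. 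Then, since $K$ is adjacent to all instance-vertices, any vertex cover of $G^\star$ of the budgeted size is forced to take all but one vertex of $K$ together with a global structure that, after accounting for the forced choices, reduces to asking whether \emph{some} $G_i$ has a vertex cover of size $k$; this is the OR we need. The budget $k^\star$ is set to (all-but-one of $K$) $+$ ($\sum$ of the trivial covers of the other $G_j$'s, $j\neq i$) $+\,k$, which works out to a single number independent of which $i$ is the ``active'' one precisely when the $G_i$ are padded to have the same number of vertices.

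The key calculations are: (a) verifying that $\omega(G^\star)$ is exactly $|K|+1$ (or $|K|$ plus the clique number contributed from one $G_i$, which we can make $O(1)$ by further padding) and that $\delta(G^\star) \ge |K|$, both of which follow directly from the adjacency of $K$ to everything and from choosing $|K|$ large; (b) verifying that $\ell^\star$ is bounded by a polynomial in the maximum source-instance size plus $\log t$ — this is where the choice $|K| = \poly(n)$, independent of $t$, is essential, since each non-active $G_j$ contributes only its $n - (\text{something})$ vertices to the budget and there are $t$ of them, so we need the ``slack'' of each $G_j$ beyond its forced part to be $0$, achievable by making each $G_j$ a graph whose complement of a maximum independent set is already forced; (c) verifying the forward and backward directions of the equivalence $G^\star$ has a cover of size $k^\star$ iff some $(G_i,k)$ is a \yes-instance. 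Once this cross-composition is in place, the framework of Bodlaender, Jansen, and Kratsch immediately yields the no-polynomial-kernel conclusion under $\NP\not\subseteq\coNP/\poly$ for both parameterizations at once.

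The main obstacle I anticipate is controlling the parameter: a naive disjoint-union-plus-universal-clique construction makes the budget $k^\star$ grow like $t\cdot n$, so $\ell^\star = k^\star - \delta$ could still be $\Theta(tn)$, which is useless. The fix is to ensure that each inactive copy $G_j$ contributes a \emph{fixed, fully-forced} amount to the vertex cover — e.g.\ replace each $G_j$ by a gadget in which all but $n$ designated vertices have degree forcing them into any cover, so that the only ``choice'' left is the original instance — and to make $|K|$ depend only on $n$ (and the fixed number of copies' worth of forced vertices per copy, which is $O(n)$ each, hence $O(tn)$ total, which is fine to be \emph{covered} but must not appear in $\ell^\star$). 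Getting $\ell^\star$ down to $\poly(n+\log t)$ — the regime where cross-composition applies — is the delicate part; it amounts to arranging that $\delta(G^\star)$ and $k^\star$ agree up to $\poly(n+\log t)$, which is plausible because $\delta(G^\star)$ is essentially $|K|$ plus the minimum instance-vertex degree, and $k^\star - |K|$ is essentially $k$ plus the per-copy forced counts, and with the right gadget these forced counts can be absorbed by inflating $|K|$ to exactly match. I would also double-check that the same single graph $G^\star$ certifies both statements: for \textsc{Vertex Cover above Clique Number} we need $\omega(G^\star)$ pinned down exactly (a clique cannot sneak in larger than $|K|+O(1)$, which holds since the $G_i$ are vertex-disjoint and each has bounded clique number after padding), and for \textsc{Vertex Cover above Minimum Degree} we only need the lower bound $\delta(G^\star) \ge |K|$, which is immediate.
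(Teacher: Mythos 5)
Your proposal does not follow the paper's route and, as written, has a gap that I do not think can be closed along the lines you sketch. The paper avoids designing a cross-composition altogether: it observes that complementation maps a \textsc{Clique} instance $(G,k)$ to the \textsc{Vertex Cover} instance $(\bar G, \abs{V}-k)$ with $\bar k - \delta(\bar G)\le \Delta(G)$ and $\bar k - \omega(\bar G)\le \vc(G)$, and then simply imports the known kernelization lower bounds for \textsc{Clique} parameterized by maximum degree (folklore) and by vertex cover number (Bodlaender, Jansen, and Kratsch). This is a two-line polynomial parameter transformation; no new composition is needed.

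The gap in your construction is the parameter accounting, which you correctly flag as ``the delicate part'' but whose proposed fix cannot work. Any vertex cover of $G^\star$ must cover every edge inside every $G_j$, so the budget satisfies $k^\star \ge \abs{K} - 1 + \sum_j \vc(G_j) \ge \abs{K} - 1 + t$ once every instance has at least one edge; meanwhile $\delta(G^\star) = \abs{K} + \min_j \delta(G_j) \le \abs{K} + n - 1$ and $\omega(G^\star) \le \abs{K} + n$, because minimum degree and clique number are determined by a single vertex or a single clique and do not accumulate over the $t$ copies. Hence $\ell^\star \ge t - n$ no matter how large you make $\abs{K}$: the universal clique adds the same amount to $k^\star$ as it does to $\delta(G^\star)$ and to $\omega(G^\star)$, so ``inflating $\abs{K}$ to exactly match'' cancels out and cannot absorb the $\Theta(t)$ cost of covering the inactive copies. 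A second problem is the OR-structure itself: with a disjoint union plus a universal clique, a cover of $G^\star$ must contain a vertex cover of \emph{every} $G_j$ (omitting a vertex of $K$ only forces \emph{more} instance vertices into the cover, since that vertex is universal), so the composed instance encodes an AND, not an OR, of the inputs, and there is no mechanism by which a single ``active'' instance is selected. Repairing both issues would essentially amount to re-proving the lower bound for \textsc{Clique} parameterized by vertex cover number from scratch; reducing from that known result, as the paper does, is the intended shortcut.
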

\begin{proof}
	We prove the statement by giving a linear parametric transformation from \textsc{Clique} parameterized by maximum degree and parameterized by the vertex cover number.
	Unless \NPincoNPslashpoly{}, under neither parameterization does \textsc{Clique} admit a polynomial kernel.
	This is folklore for maximum degree and was shown by Bodlaender~et~al.~\cite{Bodlaender2014} for vertex cover number.
	The underlying reduction takes the \textsc{Clique} instance $(G, k)$ and transforms it into the instance~$(\bar G, \bar k)$ of \textsc{Vertex Cover} where $\bar G$ is the complement graph of~$G$, that is $V(\bar G) \coloneqq V(G)$ and $E(\bar G) \coloneqq \binom{V(G)}{2}\setminus E(G)$, and $\bar k \coloneqq \abs{V}-k$.
	The reduction is obviously correct and computable in~$\bigO(\abs{V}^2)$ time.
	As for the parameterizations, observe that~$\bar k - \delta(\bar G) = (\abs{V} - k) - (\abs{V} - 1 - \Delta(G)) \le \Delta(G)$.
	Since~$\omega(\bar G) \ge |V(G)| - \vc(G)$, we also have $\bar k - \omega(\bar G) \le (|V(G)| - k) - (n - \vc(G)) \le \vc(G)$. 
	This yields the claimed transformations.
\end{proof}

Using a standard reduction from \textsc{Vertex Cover} to \textsc{Feedback Vertex Set}, we obtain the following.
\begin{corollary}
	\label{cor:fvs-above-clique-nopk}
	\textsc{Feedback Vertex Set above Clique Number} does not admit a polynomial kernel unless \NPincoNPslashpoly.
\end{corollary}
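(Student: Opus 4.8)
\textsc{Feedback Vertex Set above Clique Number} does not admit a polynomial kernel unless $\NP\subseteq\coNP/\poly$.

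The plan is to reuse the standard polynomial-parameter transformation from \textsc{Vertex Cover} to \textsc{Feedback Vertex Set} and check that it behaves well with respect to the ``above clique number'' parameterization. Recall the classical reduction: given a \textsc{Vertex Cover} instance $(G,k)$, we build a graph $G'$ by replacing every edge $\{u,v\}\in E(G)$ with a triangle, i.e.\ adding a fresh vertex $w_{uv}$ and edges $\{u,w_{uv}\}$, $\{v,w_{uv}\}$. Then $G'$ has a feedback vertex set of size at most $k$ if and only if $G$ has a vertex cover of size at most $k$: a vertex cover of $G$ hits every triangle in $G'$, and conversely, any minimal feedback vertex set of $G'$ can be assumed to avoid the degree-two vertices $w_{uv}$ (replacing $w_{uv}$ by one of its neighbours never hurts), so it corresponds to a vertex cover of $G$. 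Set $k'\coloneqq k$.

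First I would take as the source problem the parameterization of \textsc{Vertex Cover} by $\ell = k - \omega(G)$, which by \Cref{thm:vc-above-clique-nopk} admits no polynomial kernel unless $\NP\subseteq\coNP/\poly$. Applying the triangle reduction produces $(G',k')$ with $k'=k$, so it remains to bound $\omega(G')$ from below in terms of $\omega(G)$, so that $k' - \omega(G') \le k - \omega(G) + O(1) = \ell + O(1)$, making the transformation a linear parametric transformation. The key observation is that the reduction preserves cliques of $G$ (it only adds vertices and edges incident to the new degree-two vertices, and never deletes anything), so $\omega(G') \ge \omega(G)$; the new triangles contribute only cliques of size $3$, so in fact $\omega(G') = \max\{\omega(G),3\} \le \omega(G) + 3$. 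Hence $k' - \omega(G') \le k - \omega(G) + 3 = \ell + 3$, which is exactly the bound needed for a linear parametric transformation, and composing with the standard ``no polynomial kernel under a linear-parameter transformation'' machinery (cf.~\cite{Bodlaender2014}) yields the claim.

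The only step requiring genuine care is confirming that the clique number does not blow up and is not destroyed by the reduction --- but since the reduction is purely additive on vertices and edges (adding pendant-ish degree-two vertices inside triangles), any clique of $G$ survives verbatim in $G'$, and no clique of size larger than $\max\{\omega(G),3\}$ is created, because every new vertex $w_{uv}$ has degree exactly two. So I expect no real obstacle; the proof is essentially a one-line verification on top of the textbook reduction, and the corollary follows immediately.
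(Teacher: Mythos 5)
Your proposal is correct and follows essentially the same route as the paper: the folklore edge-to-triangle transformation from \textsc{Vertex Cover above Clique Number}, combined with the observation that the new degree-two vertices cannot increase the clique number beyond $\max\{\omega(G),3\}$, so the parameter is preserved up to an additive constant. Your write-up is in fact slightly more careful than the paper's (which just notes $\omega(G')=\omega(G)$ unless $\omega(G)=2$), and the only cosmetic quibble is that the bound $k'-\omega(G')\le\ell$ follows from $\omega(G')\ge\omega(G)$ rather than from the upper bound $\omega(G')\le\omega(G)+3$ you invoke at that point --- but you state both facts, so the argument is sound.
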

\begin{proof}
	We provide a linear parametric transformation from \textsc{Vertex Cover above Clique Number}.
	Given an instance~$(G, \ell)$ of \textsc{Vertex Cover}, we use the following folklore construction to obtain an instance~$(G', \ell)$ of \textsc{Feedback Vertex Set}.
	After initializing~$G'$ as a copy of~$G$, we add for each edge $\{u, v\} = e \in E(G)$ the vertex~$w_{uv}$ and the edges~$\{u, w_{uv}\}$ and~$\{v, v_{uv}\}$ to~$G'$, so that for each edge~$e \in E(G)$ there exists a unique triangle in~$G'$.
	Clearly, unless~$\omega(G) = 2$, we have~$\omega(G') = \omega(G)$.
	Hence, the parameter~$\ell - \omega(G')$ is upper-bounded by the parameter of the input problem, and we are done.
\end{proof}

We leave open whether \textsc{Feedback Vertex Set above Minimum Degree} admits a polynomial kernel.

\section{Vertex Cover above Feedback Vertex Number}
\label{sec:vc-above-fv}
\newcommand{\VCfvn}{\prob{Vertex Cover above Feedback Vertex Number}}

In this section we prove that, when parameterizing above feedback vertex number, \textsc{Vertex Cover} is \Wone-hard.
First, we prove \Wone-hardness with respect to a related parameter, namely above the \emph{distance to $K_r$-free} for every constant~$r \ge 3$, that is, the minimum number of vertices one needs to remove such that the remaining graph does not contain a clique of order $r$.
Distance to $K_3$-free is a lower bound on the feedback vertex number, so our proof also implies hardness for \VCfvn.

\begin{theorem}
	\label{thm:vc-dist-to-kr-free}
	\textsc{Vertex Cover above Distance to $K_r$-Free} is \Wone-hard.
\end{theorem}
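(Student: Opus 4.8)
The plan is to reduce from a known $\Wone$-hard problem to \textsc{Vertex Cover above Distance to $K_r$-Free}. A natural source problem is \textsc{Multicolored Clique} (or \textsc{Multicolored Independent Set}), parameterized by the number of color classes, which is the standard starting point for gadget-based $\Wone$-hardness reductions. Given an instance with color classes $V_1,\dots,V_t$ and the goal of finding a clique with one vertex per class, I would build a graph $G$ and a budget $k$ so that (a) $G$ has a small "distance to $K_r$-free" set $S$ of size bounded by a function of $t$ (and $r$), and (b) $G$ has a vertex cover of size at most $k$ if and only if the original instance is a \yes{}-instance. The key design constraint is that $\ell = k - (\text{distance to } K_r\text{-free})$ must also be bounded by a function of $t$ only, so essentially all the "slack" above the structural guarantee has to be concentrated in a constant-per-color-class amount of the construction.

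The first concrete step is to choose the gadget realizing the vertices of each color class. A clean way to force a selection is to make each color class $V_i$ into a clique $C_i$ on $|V_i|$ vertices (plus possibly a few auxiliary vertices), so that any vertex cover must leave at most one vertex of $C_i$ uncovered; the uncovered vertex encodes the chosen vertex of class $i$. Since a clique on $|V_i| \ge r$ vertices already contains $K_r$, each clique $C_i$ contributes to the "distance to $K_r$-free" measure, but only by a bounded amount if we are careful: deleting a constant number of vertices from $C_i$ can make it $K_r$-free. More precisely, I would arrange that removing $r-1$ vertices (or a similar constant depending on $r$) from each $C_i$, together with a bounded-size set handling the edge-verification gadgets, yields a $K_r$-free graph; this gives $S$ of size $O(r \cdot t)$. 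Edge verification between classes $i$ and $j$ is then done with the usual incidence/validation gadgets that penalize the budget unless the chosen endpoints are adjacent in the original graph. The second step is the careful bookkeeping: compute the minimum vertex cover size of the "no-clique-selected" baseline, add exactly enough budget $k$ so that a valid multicolored clique lets us save exactly $t$ vertices (one per class), and verify that $\ell = k - \text{dist}_{K_r}(G)$ comes out bounded by $O(r \cdot t)$, which is a function of the parameter alone.

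The main obstacle I anticipate is the simultaneous control of two quantities that pull in opposite directions: the vertex cover budget needs enough slack for the selection and verification gadgets to function, but the distance-to-$K_r$-free value must not absorb that slack in an uncontrolled way, or else $\ell$ blows up with $|V|$ rather than with $t$. Cliques are the obvious way to force selection but they are also exactly the obstructions to being $K_r$-free, so the reduction has to thread the needle — perhaps by using cliques of a fixed bounded size as "choice gadgets" combined with a separate long-path or tree-like backbone that carries no $K_r$ but still transmits the selection, or by using a twist such as attaching pendant structures so that the clique part is "hit" cheaply. A secondary technical point is ensuring correctness for every constant $r \ge 3$ uniformly rather than just $r = 3$; I would first work out the $r = 3$ case (distance to triangle-free, which directly bounds $\fvs$ and hence gives the claimed corollary for \VCfvn), and then observe that replacing triangles by $K_r$'s and adjusting the constant per-class deletion budget lifts the argument to general $r$.
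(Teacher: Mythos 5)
Your plan has a fundamental quantitative flaw that makes the approach unworkable as stated. You require both that the distance to $K_r$-freeness $d$ of the constructed graph be bounded by a function of $t$ (your condition~(a)) and that $\ell = k - d$ be bounded by a function of $t$. Together these force $k = \ell + d \le f(t)$, i.e., the vertex cover budget itself is bounded by the parameter; but then every instance your reduction produces can be solved in time $\bigO(1.2738^{f(t)} + n)$ by the standard FPT algorithm for \textsc{Vertex Cover}, so the reduction cannot establish \Wone-hardness (it would instead show that your source problem is FPT). The correct regime is the opposite one: $k$ and $d$ must both grow with $n$, and only their \emph{difference} may be controlled by the parameter. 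Relatedly, your claim that removing $r-1$ (or any constant number of) vertices from a selection clique $C_i$ on $|V_i|$ vertices makes it $K_r$-free is false: one must delete all but $r-1$ of its vertices, i.e., $|V_i|-r+1$ of them. Your anticipated fix (shrinking the cliques or replacing them with a tree-like backbone so that the $K_r$-deletion set stays small) pushes in the wrong direction for the same reason.

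The paper's proof reduces from \textsc{Independent Set} and leans into exactly what you tried to avoid: it uses $k$ selection cliques of size $n$ each, arranged so that a minimum $K_r$-deletion set must consist of essentially all $nk$ selection vertices, while a minimum vertex cover consists of $(n-1)k$ of them plus a bounded number of auxiliary vertices; the difference then works out to $(r-2)k + (r-1)(k+1) - k$, which is bounded in $k$ and $r$. The technical work lies in adding gadgets (small cliques $A_i$ and $B$ with pendant leaves, made adjacent to the selection cliques) that force the minimum $K_r$-deletion set to be \emph{exactly} the union of the selection cliques, so that $d$ can be computed exactly and the output parameter verified. If you want to salvage your outline, this is the missing ingredient: engineer the construction so that $d$ is large and provably only $\bigO(rk)$ below the vertex cover number, rather than trying to keep $d$ small.
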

\begin{proof}
	We provide a parameterized reduction from \prob{Independent Set} parameterized by the solution size~$k$.
	Let~$I = (G, k)$ be an instance of \prob{Independent Set} with~$V(G) = \oneto{n}$ and~$E(G) = \{e_1, \dots, e_m\}$.
	We create an instance~$(G', \ell)$ as follows.
	First, for each~$i \in \oneto{k}$ we add a clique on the vertex set~$V_i =\{w^i_j \mid j \in \oneto{n}\}$ to~$G'$.
	For each~$i, j \in \oneto{k}$, we add the edge between~$w^i_q$ and~$w^j_q$ for each~$q \in [n]$ and the edge between~$w^i_p$ and~$w^i_q$ for each~$\{p,q\} \in E$.
	Next, for each~$i \in \oneto{k}$ we add a set~$A_i$ of $r-2$ vertices which form a clique, attach a leaf to each vertex in~$A_i$, and make each vertex in~$A_i$ adjacent to each vertex in~$V_i$.
	Let~$A \coloneqq \bigcup_{i \in \oneto{k}} A_i$.
	Then we add $k+1$ cliques on~$r-1$ vertices.
	Call the set of these vertices~$B$, attach a leaf to each~$v \in B$, and make each~$v \in B$ adjacent to each vertex in~$\bigcup_{i \in \oneto{k}} V_i$.
	Denote by~$L$ the set of leaves in~$G'$.
	Lastly, set $\ell \coloneqq (n-1)k + \abs{A} + \abs{B} = (n-1)k + |L|$.

	For the correctness, observe that $G'$ contains a vertex cover of size~$k'$ if and only if~$G'$ contains an independent set of size
	\[\abs{V'}-\ell = nk + \abs{A} + \abs{B} + \abs{L} - ((n-1)k + \abs{A} + \abs{B}) = \abs{L}+k.\]
	Let~$Y \subseteq V(G')$ be an independent set in~$G'$.
	As it is always optimal to take leaves into an independent set, we may assume that~$Y$ contains all of~$L$.
	Hence, we may assume that~$Y \cap (A \cup B) = \emptyset$.
	Furthermore, $Y$ contains at most one vertex of each clique on~$V_i$, $i \in \oneto{k}$, and, by the construction of the edges in~$G'$, $Y$ can contain only such vertices in the cliques whose corresponding vertices in~$G$ are pairwise nonadjacent.
	Hence, $G'$ contains an independent set of size~$\abs{L}+k$ if and only if~$G$ contains an independent set of size~$k$, and the reduction is correct.

	Finally, we will show that the distance to~$K_r$-free of $G'$ is exactly~$nk$.
	This implies that the parameter of the output instance, if $d$ is the distance to $K_r$-free of $G'$, is $k'= \ell - d = (n-1)k + \abs{A} + \abs{B} - nk = \abs{A} + \abs{B} - k = (r-2)k + (r-1)(k+1) - k$.
	Since $r$ is a constant, this implies that $k'$ is bounded in $k$.
	Let~$D \subseteq V'$ be of minimum set such that~$G'-D$ is~$K_r$-free.
	Clearly, $D \cap L = \emptyset$ as~$L$ does not intersect any~$K_r$.
	Furthermore, as every vertex~$u \in A_i$ intersects a subset of the cliques that any vertex~$v \in V_i$ intersects, we may exchange each vertex in~$D \cap A_i$ with a vertex in~$V_i$.
	As there are fewer than $r$ vertices in each~$A_i$ we may assume that~$D \cap A_i = \emptyset$.
	But then $D$ must contain $n-1$ vertices of each set~$V_i$, hence, $D$ contains all but~$k$ vertices from~$\bigcup_{i \in \oneto{k}} V_i$.
	If, however, $v \notin D$ for one such~$v \in \bigcup_{i \in \oneto{k}} V_i$, then~$D$ must contain at least one vertex from each clique in~$B$.
	As there are $k+1$ such cliques, $v \notin D$ contradicts $D$ being minimum.
	Hence, $D \coloneqq \bigcup_{i \in \oneto{k}} V_i$ is a $K_r$-deletion set of size~$nk$.
\end{proof}

For $r=3$, the deletion set~$D$ in the proof above is also a feedback vertex set.
Hence, we obtain the following.
\begin{corollary}
	\label{cor:vc-fvn}
	\textsc{Vertex Cover above Feedback Vertex Number} is \Wone-hard.
\end{corollary}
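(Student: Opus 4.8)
The plan is to derive the corollary directly from \Cref{thm:vc-dist-to-kr-free} instantiated at $r = 3$, reusing the very reduction constructed there. The key observation is that a feedback vertex set is always a ``$K_3$-deletion set'': every forest is triangle-free, so deleting a feedback vertex set from a graph $G$ leaves a $K_3$-free graph, whence the distance to $K_3$-free of $G$ is at most $\fvs(G)$. (For that matter $\fvs \preceq \vc$ as well, since every vertex cover induces an edgeless, hence acyclic, complement, so \textsc{Vertex Cover above Feedback Vertex Number} is a well-defined problem in the sense of the problem definition.)

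Concretely, I would take the instance $(G', \ell)$ that the proof of \Cref{thm:vc-dist-to-kr-free} produces for $r = 3$ from an \textsc{Independent Set} instance $(G,k)$ and argue that $\fvs(G') = nk$. For the upper bound I check that the set $D = \bigcup_{i \in [k]} V_i$ used in that proof is a feedback vertex set: once all of the $V_i$ are removed, each clique $A_i$ (which for $r=3$ is a single vertex) retains only its pendant leaf, each of the $k+1$ cliques in $B$ (which for $r=3$ is an edge) retains only its two pendant leaves, and no other edges survive, so $G' - D$ is a disjoint union of trees. For the lower bound I invoke \Cref{thm:vc-dist-to-kr-free} itself, where the distance to $K_3$-free of $G'$ is shown to be exactly $nk$; combined with the inequality above this gives $\fvs(G') \ge nk$, hence $\fvs(G') = nk$. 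The correctness of $(G',\ell)$ as a \textsc{Vertex Cover} instance carries over verbatim from \Cref{thm:vc-dist-to-kr-free}, and the parameter of the produced instance is $\ell - \fvs(G') = \ell - nk = (r-2)k + (r-1)(k+1) - k = 2k+2$, which is bounded in $k$. This yields a parameterized reduction from \textsc{Independent Set} and hence $\Wone$-hardness.

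An even shorter route, which I would mention as a remark, is that the identity map is already a parameterized reduction from \textsc{Vertex Cover above Distance to $K_3$-Free} to \textsc{Vertex Cover above Feedback Vertex Number}: the two problems ask the same question, and for every instance $(G,k)$ one has $k - \fvs(G) \le k - \operatorname{dist}_{K_3\text{-free}}(G)$, so the parameter does not grow. There is essentially no obstacle in either version; the only point requiring a moment's care is verifying that $G' - D$ contains no cycle, which is immediate once the adjacencies of $A$, $B$, and the leaves are read off after the sets $V_i$ are deleted.
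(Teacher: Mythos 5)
Your proof is correct and takes essentially the same approach as the paper: the paper derives the corollary from exactly the observation you make, namely that for $r=3$ the deletion set $D=\bigcup_{i\in\oneto{k}}V_i$ from the proof of \cref{thm:vc-dist-to-kr-free} is also a feedback vertex set of $G'$, so that $\fvs(G')=nk$ and the parameter of the produced instance stays bounded in $k$. Your closing remark that the identity map already gives a parameterized reduction (because $k-\fvs(G)\le k-\operatorname{dist}_{K_3\text{-free}}(G)$) is likewise valid and mirrors the paper's own introductory comment that distance to $K_3$-free lower-bounds the feedback vertex number.
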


Observe that in the proof of \cref{thm:vc-dist-to-kr-free} we can specify a minimum deletion set.
Hence, our hardness results also hold if a minimum deletion set is given as part of the input.

\section{Vertex Cover above Cluster Deletion Number}
\label{sec:vc-above-cd}

Recall that the cluster deletion number is the minimum size of a set $X$ such that $G - X$ is a cluster graph, i.e., every connected component of $G - X$ is a clique.
We show that \textsc{Vertex Cover Above Cluster Deletion Number} is NP-hard even if the parameter is zero.

\begin{theorem}
	\label{thm:cvd}
	\textsc{Vertex Cover above Cluster Deletion} is NP-hard even if $\ell = 0$.
\end{theorem}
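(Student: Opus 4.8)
The plan is to reduce from a variant of \textsc{Vertex Cover} (or directly from \textsc{3-SAT}) in such a way that the cluster deletion number of the constructed graph equals $k$, so that $\ell = 0$. The key realization is that $\vc(G) \ge \cvd(G)$ always holds (a vertex cover is a cluster deletion set, since deleting it leaves an edgeless graph, which is a cluster graph), so the constraint $\ell = 0$ means we need instances where $\vc(G) = \cvd(G)$ exactly. Thus the task is: take an NP-hard \textsc{Vertex Cover} instance and pad/modify it so that it becomes an equivalent instance in which the optimal vertex cover and optimal cluster deletion set have the same size.

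First I would start from an instance $(G, k)$ of \textsc{Vertex Cover} (NP-hard), and build $G'$ by attaching gadgets that force any cluster deletion set to be ``almost'' a vertex cover. A natural gadget is to replace or augment each edge $\{u,v\}$ with a structure containing an induced $P_3$ (or a triangle) that can only be destroyed by hitting $u$ or $v$; for instance, subdividing is the wrong direction (that lowers $\cvd$), so instead I would add, for each edge, a pendant construction or identify edges with triangles so that leaving both endpoints out creates an induced $P_3$. More robustly, I would take $k$ disjoint ``selector'' cliques $V_1, \dots, V_k$ on the vertex set of $G$ (one copy $w^i_j$ of vertex $j$ per clique $i$), as in the proof of \Cref{thm:vc-dist-to-kr-free}, wire up copies of edges of $G$ across and within these cliques, and attach pendant vertices or small cliques so that any cluster deletion set must take $n-1$ vertices from each $V_i$ and the ``surviving'' vertices must be mutually nonadjacent in $G$ (hence an independent set of size $k$, i.e. its complement is a vertex cover). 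The pendant vertices (leaves) serve to make taking a whole clique into the deletion set suboptimal or forced in a controlled way, exactly as in the earlier proof.

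The heart of the argument is then a two-sided bound: (a) show that $G'$ has a cluster deletion set of size exactly $k$ (exhibit $\bigcup_i V_i$ or a similar canonical set and verify the leftover is a disjoint union of cliques — this is the ``easy'' direction, a direct check); and (b) show $\vc(G') \le k$ iff the original instance is a yes-instance, while simultaneously $\vc(G') \ge \cvd(G') = k$ always, so that setting the target to $k$ gives $\ell = k - \cvd(G') = 0$ and the equivalence is preserved. Concretely I would argue: a minimum cluster deletion set can be assumed to avoid the leaves and the small attached cliques (swap any such vertex for a cheaper/equivalent selector vertex), then a counting argument forces it to contain all but at most one vertex per $V_i$, and the induced-$P_3$ constraints among the surviving selector vertices translate edges of $G$ into forbidden pairs — mirroring the independent-set analysis in \Cref{thm:vc-dist-to-kr-free} almost verbatim.

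The main obstacle I anticipate is getting $\cvd(G')$ to land \emph{exactly} at $k$ rather than merely $\le k$ or $\ge k$: it is easy to build gadgets that force $\cvd$ to be large, and easy to force it small, but pinning both $\cvd(G')$ and $\vc(G')$ to the same value $k$ requires the padding cliques and pendant leaves to be balanced so that no ``cheap shortcut'' deletion set exists and no obstruction survives a size-$k$ cluster deletion set unless the \textsc{Vertex Cover} instance is feasible. I would resolve this by (i) using enough parallel copies ($k+1$ of the $B$-type cliques, say) so that skipping a forced deletion is never optimal, and (ii) checking that in the cluster graph $G' - D$ every component is genuinely a clique, not just $P_3$-free-by-accident — i.e. verifying the component structure explicitly. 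The remaining verifications (running time polynomial, the if-and-only-if for \textsc{Vertex Cover}) are routine, following the template already established in Sections~\ref{sec:vc-above-fv}.
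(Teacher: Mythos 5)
You correctly identify the crux: since every vertex cover is a cluster deletion set, we always have $\cvd(G) \le \vc(G)$, so $\ell = 0$ means $k = \cvd(G)$ and the question becomes whether $\vc(G) = \cvd(G)$; hence the task is to produce NP-hard \textsc{Vertex Cover} instances on which the two parameters coincide. This is exactly the strategy the paper follows. However, the construction you propose to realize it does not work as described, and the gap sits precisely where you flag it. You want to recycle the selector-clique construction of \cref{thm:vc-dist-to-kr-free} and claim its analysis transfers ``almost verbatim,'' but $K_r$-deletion and cluster deletion behave in \emph{opposite} ways on the gadgets that construction is built from: a large clique $V_i$ forces $n-1$ deletions when all $K_r$'s must be destroyed, whereas for cluster deletion a clique is already a cluster graph and, in isolation, costs nothing. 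The obstructions for cluster deletion are induced $P_3$'s, i.e.\ a non-edge together with an edge meeting it, so the ``all but one vertex of each $V_i$ must be deleted'' argument---the engine of the earlier proof---has no analogue; whatever deletions are forced come from the interaction between cliques (the cross edges $w^i_q w^j_q$ and the attachments), and this requires an entirely new accounting that you do not supply. Your fallback devices (pendant vertices, ``identify edges with triangles,'' extra parallel cliques) are named but not specified, and pinning $\cvd(G')$ to exactly the target $k$ while keeping the covering question NP-hard is the entire technical content of the theorem, not a routine verification.

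For comparison, the paper reduces from \textsc{3-SAT} with gadgets designed around $P_3$'s rather than cliques: each variable becomes a complete bipartite graph with sides of size $3$, each clause a complete tripartite graph with sides of size $7$, and clause-gadget sides are joined completely to the opposite-literal sides of variable gadgets. In a complete multipartite gadget every induced $P_3$ is a non-edge inside one side plus an edge to another side, and the side sizes $3$ versus $7$ are tuned so that exchange arguments (\cref{lemma:cvd1,lemma:cvd2}) show every minimum cluster deletion set consists of entire sides and is therefore a vertex cover (\cref{lemma:cvd3}); satisfiability then corresponds to $\vc(G) \le 14m + 3n = \cvd(G)$ (\cref{lemma:cvd4}). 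To salvage your outline you would need to replace the selector cliques by structures whose cluster-deletion cost you can actually control, which in effect means redesigning the gadgets from scratch.
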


We will prove this theorem by reduction from \textsc{3-SAT}.
In fact, we prove a slightly stronger claim: \textsc{Vertex Cover} is NP-hard when restricted to graphs $G$ with $\cvd(G) = \vc(G)$.
The following construction is illustrated in \Cref{fig:above-cluster-deletion}.

\begin{figure}
	\centering
	\begin{tikzpicture}[scale=1.5]
		\tikzstyle{xnode}=[circle,scale=0.3,draw,fill=white];
		\node[xnode] (u11) at (-1,0.5) {};
		\node[below=0.1cm of u11] {\tiny $v_{2,1,1}$};
		\node[xnode] (u12) at (0.5,0.5) {};
		\node[below=0.1cm of u12] {\tiny $v_{2,2,1}$};
		\node[xnode] (u13) at (2,0.5) {};
		\node[below=0.1cm of u13] {\tiny $v_{2,3,1}$};
		\node[xnode] (u21) at (-1,1) {};
		\node[left=0.005cm of u21] {\tiny $v_{1,1,1}$};
		\node[xnode] (u22) at (0.5,1) {};
		\node[left=0.005cm of u22] {\tiny $v_{1,2,1}$};
		\node[xnode] (u23) at (2,1) {};
		\node[left=0.005cm of u23] {\tiny $v_{1,3,1}$};
		\foreach \x in {1,2,3} {
			\foreach \y in {1,2,3} {
				\draw (u1\x) -- (u2\y);	
			}
		}
	
		\node[xnode] (v11) at (3,0.5) {};
		\node[below=0.1cm of v11] {\tiny $v_{1,1,2}$};
		\node[xnode] (v12) at (4.5,0.5) {};
		\node[below=0.1cm of v12] {\tiny $v_{1,2,2}$};
		\node[xnode] (v13) at (6,0.5) {};
		\node[below=0.1cm of v13] {\tiny $v_{1,3,2}$};
		\node[xnode] (v21) at (3,1) {};
		\node[left=0.005cm of v21] {\tiny $v_{2,1,2}$};
		\node[xnode] (v22) at (4.5,1) {};
		\node[left=0.005cm of v22] {\tiny $v_{2,2,2}$};
		\node[xnode] (v23) at (6,1) {};
		\node[left=0.005cm of v23] {\tiny $v_{2,3,2}$};
		\foreach \x in {1,2,3} {
			\foreach \y in {1,2,3} {
				\draw (v1\x) -- (v2\y);	
			}
		}
		
			\node[xnode] (w11) at (1,4) {};
			\node[above=0.005cm of w11] {\tiny $u_{3,1,1}$};
			\node[xnode] (w12) at (1.5,4) {};
			\node[above=0.005cm of w12] {\tiny $u_{3,2,1}$};
			\node[xnode] (w13) at (2,4) {};
			\node[above=0.005cm of w13] {\tiny $u_{3,3,1}$};
			\node[xnode] (w14) at (2.5,4) {};
			\node[above=0.005cm of w14] {\tiny $u_{3,4,1}$};
			\node[xnode] (w15) at (3,4) {};
			\node[above=0.005cm of w15] {\tiny $u_{3,5,1}$};
			\node[xnode] (w16) at (3.5,4) {};
			\node[above=0.005cm of w16] {\tiny $u_{3,6,1}$};
			\node[xnode] (w17) at (4,4) {};
			\node[above=0.005cm of w17] {\tiny $u_{3,7,1}$};
			
			\node[xnode] (w21) at (1,3.25) {};
			\node[xnode] (w22) at (1.2,3) {};
			\node[xnode] (w23) at (1.4,2.75) {};
			\node[xnode] (w24) at (1.6,2.5) {};
			\node[xnode] (w25) at (1.8,2.25) {};
			\node[xnode] (w26) at (2,2) {};
			\node[xnode] (w27) at (2.2,1.75) {};
			
			\node[xnode] (w31) at (4,3.25) {};
			\node[xnode] (w32) at (3.8,3) {};
			\node[xnode] (w33) at (3.6,2.75) {};
			\node[xnode] (w34) at (3.4,2.5) {};
			\node[xnode] (w35) at (3.2,2.25) {};
			\node[xnode] (w36) at (3,2) {};
			\node[xnode] (w37) at (2.8,1.75) {};
		
		\foreach \x in {1,...,7} {
			\foreach \y in {1,...,7} {
				\foreach \z in {1,...,7} {
					\draw[gray,line width=0.025pt] (w1\x) -- (w2\y);
					\draw[gray,line width=0.025pt] (w1\x) -- (w3\z);
					\draw[gray,line width=0.025pt] (w2\y) -- (w3\z);
				}
			}
		}

		\foreach \x in {1,2,3} {
			\foreach \y in {1,...,7} {
				\draw (u2\x) -- (w2\y);
			}
		}
	
		\foreach \x in {1,2,3} {
			\foreach \y in {1,...,7} {
				\draw (v2\x) -- (w3\y);
			}
		}
	
		\node[left=of u21] {$x_1$};
		\node[left=of u11] {$\neg x_1$};
		\node[right=of v23] {$\neg x_2$};
		\node[right=of v13] {$x_2$};
	\end{tikzpicture}
	\caption{An excerpt of the graph $G$ output by
	\cref{constr:cvd}: At the bottom are the vertex gadgets for the variables $x_1$ and $x_2$.
	At the top is a clause gadget representing a clause that contains both $x_1$ and $\neg x_2$.}
	\label{fig:above-cluster-deletion}
\end{figure}
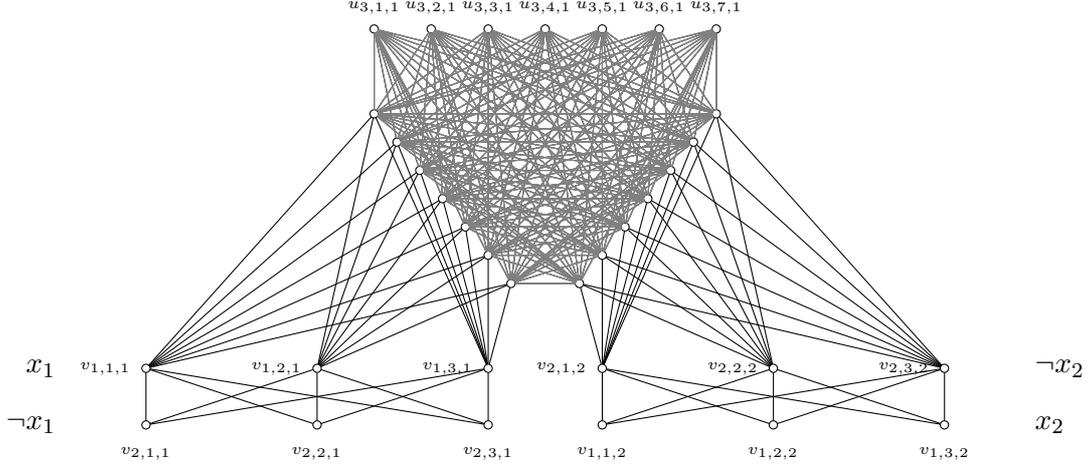

\begin{construction}
	\label{constr:cvd}
	Let $\varphi$ be a Boolean formula in 3-CNF consisting of the clauses $C_1,\ldots,C_m$ over the variables~$x_1,\ldots,x_n$.
	We may assume that each clause of $\varphi$ contains exactly three literals.
	For each $i \in \{1,\ldots,m\}$, let~$C_i = (L_i^1 \vee L_i^2 \vee L_i^3)$ where $L_i^1$, $L_i^2$, and $L_i^3$ are literals.
	
	We construct a graph $G$ and an integer $k \coloneqq 14m + 3n$ such that~$\cvd(G) = \vc(G)$ and $\vc(G) \leq k$ if and only if $\varphi$ is satisfiable.
	Each variable $x_j$ is represented by a \emph{variable gadget} consisting of six vertices $(v_{r,s,j})_{r\in\{1,2\},s\in\{1,2,3\}}$ that induce a complete bipartite graph with three vertices in each color class.
	Each clause $C_i$ is represented by a \emph{clause gadget} consisting of twenty-one vertices $(u_{r,s,i})_{r \in\{1,2,3\},s\in\{1,\ldots,7\}}$ that induce a complete tripartite graph with seven vertices in each color class.
	The two sides of the bipartition of a variable gadget correspond to its positive and negative literals and the three sides of the tripartition of a clause gadget correspond to the three literals the clause contains.
	All vertices in a side of a clause gadget are connected to all vertices in the side of a variable gadget if these two sides correspond to the literal of opposite sign.
	Formally, we let:
	\begin{alignat*}{2}
		V(G)& \coloneqq && \{ u_{r,s,i} \mid r \in\{1,2,3\},s\in\{1,\ldots,7\},i\in\{1,\ldots,m\}\}\\
		& && \cup \{v_{r,s,j} \mid r\in\{1,2\},s\in\{1,2,3\},j\in\{1,\ldots,n\}\} \text{ and}\\
		E(G)&\coloneqq&& \{\{u_{r,s,i},u_{r',s',i}\} \mid r,r' \in\{1,2,3\},r\neq r',s,s' \in\{1,\ldots,7\},i\in\{1,\ldots,m\}\}\\
		& &&\cup \{ \{v_{1,s,j},v_{2,s',j}\} \mid s,s'\in\{1,2,3\},j\in\{1,\ldots,n\}\}\\
		& && \cup \{\{u_{r,s,i},v_{1,s',j}\} \mid s\in\{1,\ldots,7\},s'\in\{1,2,3\},L^r_i = x_j\}\\
		& && \cup \{\{u_{r,s,i},v_{2,s',j}\} \mid s\in\{1,\ldots,7\},s'\in\{1,2,3\}, L^r_i = \neg x_j\}.
	\end{alignat*}
\end{construction}

\begin{lemma}
	\label{lemma:cvd1}
	Let $G$ be the graph output by \cref{constr:cvd} and $X\subseteq V(G)$ be a minimum cluster deletion set.
	Then, in each clause gadget of $G$, $X$ contains all vertices in two of the sides of the gadget and none of the vertices of the third side.
\end{lemma}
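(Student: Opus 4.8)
The plan is to combine a single structural observation about the clause gadgets with an exchange argument against minimality. Writing $U_i$ for the vertex set of the clause gadget of $C_i$ and $S^i_r$ for its $r$-th side, \cref{constr:cvd} makes $G[U_i]$ the complete tripartite graph on $S^i_1,S^i_2,S^i_3$. Hence for any $X$ the induced subgraph $(G-X)[U_i]$ is again complete multipartite, $K_{a_1,a_2,a_3}$ with $a_r=|S^i_r\setminus X|$, and a complete multipartite graph is $P_3$-free (that is, a disjoint union of cliques) precisely when at most one part is nonempty or every part has size at most one. Consequently every cluster deletion set deletes at least $14$ vertices of each clause gadget, and deletes exactly $14$ only in the claimed ``two full sides, third side avoided'' pattern. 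It therefore suffices to show that a minimum cluster deletion set never deletes $15$ or more vertices of a clause gadget.

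So assume $X$ is a minimum cluster deletion set with $|X\cap U_i|\ge 15$ for some $i$. By the characterization we are in one of two situations: (A) at most one side of $U_i$ keeps a vertex in $G-X$ (so the other two sides are contained in $X$); or (B) every side keeps at most one vertex, whence $|X\cap U_i|\ge 18$. I would pick a side $S$ of $U_i$ to preserve --- a side with a surviving vertex in case~(A), or an arbitrary side if all sides are deleted, and an arbitrary side in case~(B) --- and then build a cluster deletion set $X'$ with $|X'|<|X|$ as follows: remove from $X$ all of $S$, add all surviving vertices of the two non-preserved sides of $U_i$, and add all surviving vertices of the unique side $\sigma$ of the one variable gadget in which $S$ has neighbours outside $U_i$ (this is the gadget of the variable in the literal $S$ represents, by the construction). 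Then $(G-X')[U_i]$ is one full side, so $|X'\cap U_i|=14$; the reinstated vertices of $S$ have all their $G$-neighbours --- which lie in the other sides of $U_i$ and in $\sigma$ --- inside $X'$, so they are isolated in $G-X'$, and since $G-X'$ is otherwise an induced subgraph of $G-X$ it is $P_3$-free. Thus $X'$ is again a cluster deletion set.

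To finish I would verify $|X'|<|X|$ by splitting on how many vertices $S$ keeps in $G-X$. If $S$ keeps at least two vertices, then $\sigma$ keeps none --- otherwise two non-adjacent vertices of $S$ together with a survivor of $\sigma$ form an induced $P_3$ in $G-X$ --- so nothing is added outside $U_i$ and we have already saved at least one vertex. If $S$ keeps exactly one vertex, then $\sigma$ keeps at most one, which we add, but $|X\cap U_i|\ge 20$ in case~(A) and $|X\cap U_i|\ge 18$ in case~(B), so the gain of at least $4$ inside $U_i$ outweighs that single added vertex. If $S$ keeps no vertex, then $|X\cap U_i|\ge 18$ and the at most three added vertices of $\sigma$ are again outweighed. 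Hence $|X'|<|X|$ in every case, contradicting minimality, and the lemma follows.

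The step I expect to be the main obstacle --- and the only genuinely delicate point --- is this last balancing: one must ensure that the slack freed up by collapsing a clause gadget to exactly $14$ deletions always strictly dominates the up to three deletions that reinstating a full side can force in the single adjacent variable side. Organizing the cases by the number of survivors of the preserved side (so that the ``$\ge 2$ survivors'' case needs no repair at all and the others start from $|X\cap U_i|\ge 18$) is what makes this work. A secondary point to double-check is that the modification affects nothing outside $U_i$ and the one variable gadget, which holds because the only vertices reinstated become isolated in $G-X'$.
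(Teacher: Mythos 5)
Your proof is correct and follows essentially the same route as the paper's: both derive the dichotomy (delete two full sides, or delete all but at most one vertex from each side) from the $P_3$-structure of the complete tripartite clause gadget, and both then use an exchange argument that trades deleted clause-gadget vertices against the at most three vertices of the single adjacent variable-gadget side. Your version merely unifies the paper's two separate exchange cases into one ``preserve a side $S$'' operation, organized around the explicit lower bound of $14$ deletions per gadget.
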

\begin{proof}
	For every $i \in \{1,\ldots,m\}$, the deletion set $X$ must contain either
	\begin{inparaenum}[(i)]
		\item all vertices from two sides of the clause gadget for $C_i$ or
		\item all but one vertex from each side of this clause gadget,
	\end{inparaenum}
	because if $X$ omits two vertices from one side and an additional vertex from a second side these three vertices induce a $P_3$.
	
	In case (i), it only remains to show that $X$ does not contain any of the vertices in the third side.
	Let $A \coloneqq \{u_{r,1,i},\ldots,u_{r,7,i}\}$ with $r \in \{1,2,3\}$ be the vertices in this side.
	Let $\alpha_r \coloneqq 1$, if $L^r_i = x_{j}$, and $\alpha_r \coloneqq 2$, if $L^r_i = \neg x_{j}$.
	If $X$ contains all of the vertices in $B\coloneqq \{v_{\alpha_r,1,j}, v_{\alpha_r, 2, j}, v_{\alpha_r,3,j}\}$, then
	$X \setminus A$ is a cluster deletion set strictly smaller than $X$.
	If $X$ does not contain all vertices in $B$, then it must contain all but one of the vertices in $A$ (i.e., $|X \cap A| \ge 6$).
	Then $(X \setminus A) \cup B$ is a cluster deletion set and $\abs{(X \setminus A) \cup B} \leq \abs{X} - 6 + 3 < \abs{X}$.
	
	For case (ii), let $r, r', r''$ be the three sides of the clause gadget and let $X_{r'''}$ be the set of vertices in $X$ that lie in side $r'''$.
	Assume that $X$ does not contain all vertices from two sides~$r,r'$.
	Since $X$ must contain all but one vertex from each side, we may assume without loss of generality that $X_r = \{ u_{r,1,i}, \dots, u_{r,6,i} \}, |X_{r'}| = \{ u_{r',1,i}, \dots, u_{r',6,i} \}$, and $|X_{r''}| \ge 6$ (note that it may contain all vertices in the third side $r''$).
	Let $\alpha_{r''} \coloneqq 1$, if $L^{r''}_i = x_{j}$, and $\alpha_{r''} \coloneqq 2$, if $L^{r''}_i = \neg x_{j}$ and let
	\begin{align*}
		X' \coloneqq &\, (X \setminus X_{r''})
		\cup \{u_{r,7,i}, u_{r',7,i} \} \cup \{v_{\alpha_{r''},s,j} \mid s \in \{1,2,3\}\}.
	\end{align*}
	Then, $|X'| \leq |X| - 6 + 2 + 3 = |X| -1$.
	Moreover, $X'$ is a cluster deletion set in $G$, because $X \setminus X' \subseteq  \{u_{r'',s,i} \mid s\in\{1,\ldots,7\}\}$, but all of these vertices are isolated and, therefore, not part of any $P_3$ in $G-X'$.
	Hence, $X$ is not minimum.
\end{proof}

A similar statement is also true for vertex gadgets:

\begin{lemma}
	\label{lemma:cvd2}
	Let $G$ be the graph output by \cref{constr:cvd} and $X\subseteq V(G)$ be a minimum cluster deletion set.
	Then, in each variable gadget of $G$, $X$ contains all vertices in one of the two sides of the gadget.
\end{lemma}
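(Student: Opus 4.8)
The plan is to argue by contradiction with the minimality of~$X$: I fix a minimum cluster deletion set~$X$ and some variable gadget, say the one for~$x_j$, whose six vertices induce a~$K_{3,3}$, and I show that if $X$ does not contain an entire side of this gadget, then $X$ is not minimum. First I would record what $P_3$-freeness of $G-X$ forces inside the gadget: whenever $X$ omits two vertices of one side and at least one vertex of the other side, those three vertices induce a~$P_3$ in $G-X$. Consequently either $X$ contains a whole side (which is exactly the claim) or $X$ contains precisely two vertices of each side; so it remains to rule out the latter configuration.

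So suppose $X$ contains exactly two vertices of each side. Using that the three vertices within a side of the gadget have identical neighborhoods in the rest of~$G$, I may assume that the two survivors are $v_{1,3,j}$ and $v_{2,3,j}$; they are adjacent in $G-X$. The key step is to observe that then no clause vertex adjacent to this gadget survives in $G-X$: every such clause vertex is adjacent to all three vertices of exactly one side of the gadget and to none of the other side, so a surviving one would be adjacent to exactly one of $v_{1,3,j}, v_{2,3,j}$ and nonadjacent to the other, yielding an induced $P_3$ on these three vertices. Hence all clause vertices adjacent to the gadget for~$x_j$ lie in~$X$.

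Next I would perform the exchange $X' := (X \setminus \{v_{2,1,j}, v_{2,2,j}\}) \cup \{v_{1,3,j}\}$, so that $|X'| = |X| - 1$ and $X'$ contains the entire first side of the gadget. To verify that $X'$ is still a cluster deletion set, note that $G-X'$ and $G-X$ induce the same (hence $P_3$-free) subgraph on the vertices outside this gadget, and the three vertices $v_{2,1,j}, v_{2,2,j}, v_{2,3,j}$ that all survive in $G-X'$ are pairwise nonadjacent and, by the previous paragraph together with the gadget structure, have all their remaining neighbors in $X'$; thus they are isolated in $G-X'$. Therefore $G-X'$ is a disjoint union of a $P_3$-free graph with three isolated vertices, hence $P_3$-free, and $X'$ contradicts the minimality of~$X$.

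I expect the main obstacle to be exactly this last verification: the naive one-line exchange ``replace the four deleted gadget vertices by one full side'' does \emph{not} obviously give a cluster deletion set, because restoring a side can create a $P_3$ through a clause vertex attached to that side. The real content is the observation of the second paragraph that, in the configuration we are trying to exclude, no such clause vertex can have survived; once this is in place the exchange is immediate. A minor point to get right is the symmetry argument that lets us name the two surviving vertices $v_{1,3,j}$ and $v_{2,3,j}$, and keeping track that the exchange modifies only vertices of the variable gadget, so it leaves the clause vertices untouched.
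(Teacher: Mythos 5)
Your proof is correct and follows essentially the same strategy as the paper's: reduce to the configuration where $X$ omits exactly one vertex from each side, show that every clause vertex attached to the gadget must lie in $X$, and then trade the two deleted vertices of one side for the single survivor of the other, leaving that other side isolated. The only (minor) difference is that the paper derives the deletion of the adjacent clause-gadget sides by first noting that all but one vertex of each such side must be in $X$ and then invoking \cref{lemma:cvd1}, whereas you obtain it directly from the induced $P_3$ formed by a surviving clause vertex and the two adjacent surviving gadget vertices; both justifications are valid.
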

\begin{proof}
	If $X$ does not contain all vertices in either side of a vertex gadget corresponding to the variable $x_j$, it must contain all but one vertex from each side.
	Let $C_{i_1},\ldots,C_{i_t}$ be the clauses that contain the literal $x_j$ and let $r_{1},\ldots,r_{t}$ be the sides of each of the corresponding clause gadgets whose vertices are adjacent to the side in the vertex gadget of $x_j$.
	Then, $X$ must contain all but one vertex in the side $r_{t'}$ of the clause gadget for $C_{i_{t'}}$ for each~$t' \in \{1,\ldots,t\}$. 
	By \cref{lemma:cvd1}, it follows that $X$ contains all vertices in each of those sides.
	Hence, removing from $X$ all vertices in the $r=1$ side of $x_j$'s vertex gadget and adding the remaining vertex in the $r=2$ side yields a smaller cluster deletion set.
\end{proof}

\begin{lemma}
	\label{lemma:cvd3}
	Let $G$ be the graph output by \cref{constr:cvd} and $C\subseteq V(G)$ be a minimum cluster deletion set.
	Then, $C$ is a vertex cover of $G$.
	Hence, $\cvd(G) = \vc(G)$.
\end{lemma}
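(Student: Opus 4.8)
The plan is to fix a minimum cluster deletion set $C \subseteq V(G)$ and prove directly that every edge of $G$ has an endpoint in $C$, that is, that $C$ is a vertex cover. Once this is done, $\cvd(G) = \vc(G)$ follows immediately: every vertex cover $X$ of $G$ is a cluster deletion set, since $G - X$ is edgeless and hence a (trivial) cluster graph, giving $\cvd(G) \le \vc(G)$; and $C$ being a vertex cover gives $\vc(G) \le \abs{C} = \cvd(G)$.

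To verify that $C$ covers every edge, I would partition $E(G)$ according to \cref{constr:cvd} into edges internal to a clause gadget, edges internal to a variable gadget, and the ``cross'' edges joining one side of a clause gadget to one side of a variable gadget. For an edge internal to a clause gadget: by \cref{lemma:cvd1}, $C$ contains all vertices of two of the three sides and avoids the third; since the gadget is complete tripartite, the edge's two endpoints lie in distinct sides, so at least one of them lies in a side fully contained in $C$. For an edge internal to a variable gadget: by \cref{lemma:cvd2}, $C$ contains all vertices of one of the two sides, and since the gadget is complete bipartite, every internal edge has an endpoint in that side.

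The remaining and central case is a cross edge. By symmetry assume it joins $u_{r,s,i}$ to $v_{1,s',j}$, where $L^r_i = x_j$ (the case $L^r_i = \neg x_j$ is identical with $v_{2,\cdot,j}$ in place of $v_{1,\cdot,j}$), and suppose toward a contradiction that $u_{r,s,i} \notin C$ and $v_{1,s',j} \notin C$. From $u_{r,s,i} \notin C$ and \cref{lemma:cvd1}, side $r$ of the clause gadget of $C_i$ must be the side of which $C$ contains no vertex; in particular all seven of its vertices survive in $G - C$, so we may pick two distinct ones, say $u_{r,1,i}$ and $u_{r,2,i}$. From $v_{1,s',j} \notin C$ and \cref{lemma:cvd2}, the full variable side taken by $C$ cannot be the $r=1$ side, so it is the $r=2$ side, and $v_{1,s',j}$ indeed survives in $G - C$. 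By the definition of the cross edges in \cref{constr:cvd}, both $u_{r,1,i}$ and $u_{r,2,i}$ are adjacent to $v_{1,s',j}$, while $u_{r,1,i}$ and $u_{r,2,i}$ are nonadjacent, as they lie in the same side of the tripartite clause gadget. Hence $\{u_{r,1,i}, v_{1,s',j}, u_{r,2,i}\}$ induces a $P_3$ in $G - C$, contradicting that $C$ is a cluster deletion set. Therefore $C$ is a vertex cover, and $\cvd(G) = \vc(G)$ follows as above.

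I expect the only real obstacle to be this last case: excluding the ``misaligned'' configuration in which an untouched clause side is adjacent to precisely the variable side that $C$ does \emph{not} take entirely. The resolution is to exploit that an untouched clause side still has at least two vertices outside $C$, so any surviving neighbour in the adjacent variable side completes a forbidden induced $P_3$; everything else is a routine case check powered by \cref{lemma:cvd1} and \cref{lemma:cvd2}.
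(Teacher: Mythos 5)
Your proposal is correct and follows essentially the same route as the paper: both use \cref{lemma:cvd1,lemma:cvd2} to cover the edges internal to clause and variable gadgets, and both dispose of the cross edges by observing that if a clause-gadget side is entirely missed by $C$, any surviving neighbour in the adjacent variable side together with two (nonadjacent) vertices of that side induces a forbidden $P_3$. You merely spell out the $P_3$ argument that the paper states in one sentence, so no further comment is needed.
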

\begin{proof}
	Clearly $\cvd(G) \le \vc(G)$.
	We show that $\cvd(G) \ge \vc(G)$.
	By \cref{lemma:cvd1,lemma:cvd2}, $C$ covers all edges within each clause and each variable gadget.
	It remains to show that edges between these gadgets are covered.
	The only such edges are between sides of a clause gadget and sides of a variable gadget when these two sides correspond to the same literal.
	Then, $C$ must contain all vertices in one of these two sides, since, otherwise, $G-C$ would contain an induced $P_3$.
	Hence, $C$ also covers all edges between those two sides.
\end{proof}

\begin{lemma}
	\label{lemma:cvd4}
	Let $\varphi$ be a formula in 3-CNF and $G$ the graph output by \cref{constr:cvd} on input $\varphi$.
	Then, $\varphi$ is satisfiable if and only if $\vc(G) \leq \ell$.
\end{lemma}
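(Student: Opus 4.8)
The plan is to prove both directions by an explicit correspondence between satisfying assignments of $\varphi$ and vertex covers of $G$ of size exactly $k = 14m + 3n$ (the integer fixed in \cref{constr:cvd}). The first step is a counting observation pinning down cheap vertex covers. The $m$ clause gadgets and the $n$ variable gadgets occupy pairwise disjoint vertex sets; each clause gadget induces a complete tripartite graph $K_{7,7,7}$ and each variable gadget a complete bipartite graph $K_{3,3}$. A minimum vertex cover of $K_{7,7,7}$ has size $14$ and is necessarily the union of two of its colour classes — for each pair of classes at most one can have a vertex outside the cover — and likewise a minimum vertex cover of $K_{3,3}$ has size $3$ and is one of its two classes. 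Hence every vertex cover of $G$ has size at least $14m + 3n = k$, and one of size exactly $k$ must consist of precisely two full sides of every clause gadget and precisely one full side of every variable gadget; in particular $\vc(G) \le k$ is equivalent to $\vc(G) = k$.

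For the ``if'' direction, suppose $C$ is a vertex cover of $G$ with $|C| = k$, so $C$ omits exactly one side of each clause gadget and retains exactly one side of each variable gadget. I define an assignment $\beta$ by declaring $\beta(x_j)$ true precisely when $C$ contains the $r=1$ side $\{v_{1,1,j}, v_{1,2,j}, v_{1,3,j}\}$ of the gadget of $x_j$. I then argue that the literal labelling the side omitted by $C$ in each clause gadget is satisfied by $\beta$: if side $r$ of the gadget of $C_i$ is omitted and $L_i^r = x_j$, then by construction every edge $\{u_{r,s,i}, v_{1,s',j}\}$ is present, and since none of the $u_{r,s,i}$ lies in $C$, all of the $v_{1,s',j}$ must, so $\beta(x_j)$ is true; the case $L_i^r = \neg x_j$ is symmetric using the $r=2$ side. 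Since each clause gadget omits exactly one side, every clause has a satisfied literal, so $\varphi$ is satisfiable.

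For the ``only if'' direction, given a satisfying assignment $\beta$ I construct $C$ directly: in the gadget of each variable $x_j$ I put the $r=1$ side into $C$ if $\beta(x_j)$ is true and the $r=2$ side otherwise; in the gadget of each clause $C_i$ I fix some literal $L_i^r$ satisfied by $\beta$ and put the two remaining sides into $C$. Then $|C| = 3n + 14m = k$, and $C$ is a vertex cover: the two chosen clause sides cover each $K_{7,7,7}$, the chosen variable side covers each $K_{3,3}$, and an inter-gadget edge incident to clause side $r'$ of $C_i$ is covered by that side unless $r'$ is the omitted one, in which case $L_i^{r'}$ is true under $\beta$ and so the matching variable side — the $r=1$ side if $L_i^{r'}$ is positive, the $r=2$ side if it is negative — lies in $C$ and covers the edge. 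Therefore $\vc(G) \le k$.

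Most of the work is routine once the structural observation is established. The one delicate point is the sign bookkeeping: one must match the two sides of a variable gadget to the two truth values consistently with the adjacency rule of \cref{constr:cvd} (clause sides for a literal are joined to the $r=1$ side of the variable when the literal is positive and to the $r=2$ side when it is negative), so that a single definition of $\beta$ works in both directions. Combining this lemma with \cref{lemma:cvd3} then gives \cref{thm:cvd}.
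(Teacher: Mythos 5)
Your proof is correct, and the forward direction (satisfying assignment $\Rightarrow$ vertex cover of size $14m+3n$) is essentially identical to the paper's. The backward direction, however, takes a genuinely different and more self-contained route. The paper assumes the given vertex cover is minimum and then invokes \cref{lemma:cvd1,lemma:cvd2,lemma:cvd3} --- which are statements about minimum \emph{cluster deletion sets} --- to deduce that the cover consists of two full sides per clause gadget and one full side per variable gadget; this requires the (slightly glossed-over) observation that a minimum vertex cover is also a minimum cluster deletion set because $\cvd(G)=\vc(G)$. You instead derive the same structural fact directly: the gadgets are vertex-disjoint, each clause gadget is a $K_{7,7,7}$ whose minimum vertex covers are exactly the unions of two colour classes (size $14$), each variable gadget is a $K_{3,3}$ whose minimum vertex covers are exactly its two sides (size $3$), so $\vc(G)\ge 14m+3n=k$ and any cover of size $k$ has the required shape. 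This counting argument is more elementary and decouples the lemma from the cluster-deletion machinery, which the paper only needs anyway to establish $\cvd(G)=\vc(G)$ for \cref{thm:cvd}; the paper's route buys nothing extra here beyond reusing lemmas it has already proved. Your sign bookkeeping for the assignment $\beta$ matches the adjacency rule of \cref{constr:cvd}, and the remaining steps coincide with the paper's.
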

\begin{proof}
	First, suppose that $\varphi$ is satisfiable and that $\alpha \colon \{x_1,\ldots,x_n\} \rightarrow \{0,1\}$ is a satisfying assignment.
	We extend $\alpha$ to literals on this variable set in the natural way.
	Since $\alpha$ satisfies every clause in $\varphi$, there is an $\alpha_i \in \{1,2,3\}$ for every $i\in\{1,\ldots,m\}$ such that $\alpha(L_i^{\alpha_i}) = 1$.
	Let
	\begin{align*}
		C \coloneqq \, & \{u_{r,s,i} \mid r \in \{1,2,3\} \setminus \{\alpha_i\},s\in\{1,\ldots,7\},i\in\{1,\ldots,m\}\}\\
		&\cup \{v_{1,s,j} \mid s\in\{1,2,3\},j\in\{1,\ldots,n\}, \alpha(x_j) = 1\}\\
		&\cup \{v_{2,s,j} \mid s\in\{1,2,3\},j\in\{1,\ldots,n\}, \alpha(x_j) = 0\}.
	\end{align*}
	First, note that $|C| = 14 m + 3n = k$.
	Secondly, we claim that $C$ is a vertex cover of $G$.
	Clearly, all edges within clause gadgets and all edges within vertex gadgets are covered, because $C$ contains all but one side in each of those gadgets.
	Edges between a vertex gadget and a clause gadget are covered because $C$ contains vertices in sides of vertex gadgets, unless the literal this side corresponds to is not satisfied by $\alpha$, but if this is the case, then $C$ contains all sides of clause gadgets that correspond to this literal.
	
	Now, suppose that $C\subseteq V(G)$, $|C|\leq k$, is a vertex cover of $G$.
	We may assume that $C$ is minimum.
	Hence, by \cref{lemma:cvd1,lemma:cvd2,lemma:cvd3}, it contains all vertices in at two sides of every clause gadget and all vertices in exactly one side of every variable gadget.
	Let $\alpha \colon \{x_1,\ldots,x_n\} \rightarrow \{0,1\}$ with:
	\begin{align*}
		\alpha(x_j) \coloneqq
		\begin{cases}
			1, & \text{if } \{v_{1,s,j} \mid s \in \{1,2,3\} \} \subseteq C,\\
			0, & \text{if } \{v_{2,s,j} \mid s \in \{1,2,3\} \} \subseteq C.
		\end{cases}
	\end{align*}
	We claim that $\alpha$ satisfies $\varphi$.
	Let $C_i$ be a clause in $\varphi$.
	One of the three sides of the gadget representing $C_i$ is not contained in $C$.
	This side corresponds to the literal $L^r_i \in \{x_j, \neg x_j\}$.
	If $L^r_i = x_j$, then all vertices in $\{u_{r,s,i} \mid s\in \{1,\ldots,7\}\}$ are adjacent to all vertices in $\{v_{1,s,j} \mid s \in \{1,2,3\}\}$.
	Since $\{u_{r,s,i} \mid s\in \{1,\ldots,7\}\} \not\subseteq C$, it follows that $\{v_{1,s,j} \mid s \in \{1,2,3\}\}\subseteq C$ and, therefore, $\alpha(x_j) = 1$.
	Hence, $\alpha$ satisfies the clause $C_i$.
	The case where $L^r_i = \neg x_j$ is analogous.
\end{proof}

\cref{thm:cvd} follows from the preceding lemmas.

\begin{proof}[Proof of \cref{thm:cvd}]
	Clearly, \cref{constr:cvd} can be computed in polynomial time.
	The claim follows by \cref{lemma:cvd3,lemma:cvd4}.
\end{proof}

We remark that the NP-hardness of \textsc{Vertex Cover above Cluster Deletion Number} holds even if we are given a minimum cluster deletion set as part of the input.
To show this, we slightly adapt \Cref{constr:cvd}.
Let $(G, k)$ be an instance given in \Cref{constr:cvd}.
We further introduce $7m/3 + n$ complete graphs on three vertices (we may assume that $m$ is divisible by $3$, otherwise we add dummy clauses). Denote these vertices by $T$ and observe that $\abs{T} = 7m + 3n$.
We add an edge between each vertex in $V(G)$ and each vertex in $T$.
Let $H$ denote the resulting graph.
Observe that $V(G)$ is a cluster vertex deletion set of size $21m + 6n$ of $H$.
Moreover, $\cvd(H) \geq \cvd(G) + \abs{T} \geq 21m + 6n$.
It is not difficult to show that any vertex cover of size at most $21m + 6n$ of $G$ contains every vertex of $T$.
Thus, $H$ has a vertex cover of size at most $21m + 6n$ if and only if $G$ has a vertex cover of size $21m + 6n - |T| = k$.
\Cref{lemma:cvd4} establishes the correctness of the reduction.

\section{Feedback Vertex Set below Vertex Cover}
\label{sec:fvs-below-vc}

\textsc{Feedback Vertex Set below $n$} is generally known as the \textsc{Maximum Induced Forest} problem and is known to be \Wone{}-hard with respect to the solution size~\cite{Khot2002}.
In the following, we consider \textsc{Feedback Vertex Set below Vertex Cover}, essentially the same problem but with a slightly smaller parameter.
We show that this change is sufficient to make the problem \NP-hard even if the parameter is fixed at two.

\newcommand{\instfvn}{\ensuremath{k}}

\begin{theorem}
	\label{thm:fvn-below-vc}
	\prob{Feedback Vertex Set below Vertex Cover} is $\NP$-hard even if~$\ell=\vc(G)-\fvn(G)=2$.
\end{theorem}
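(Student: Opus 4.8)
The plan is to reduce from \prob{Feedback Vertex Set} restricted to bipartite graphs, which is \NP-hard (for instance because subdividing every edge of an arbitrary instance preserves the feedback vertex number and produces a bipartite graph). Bipartite graphs are convenient here because there $\vc(G)$ equals the maximum matching number and is thus polynomial-time computable, and every graph I build will be bipartite. Throughout I use the dual formulation: writing $\alpha(G) = |V(G)| - \vc(G)$ for the independence number and $f(G) = |V(G)| - \fvn(G)$ for the maximum order of an induced forest, the condition $\fvn(G) \le \vc(G) - 2$ is equivalent to $f(G) \ge \alpha(G) + 2$. (One always has $\fvn(G) \le \vc(G) - 1$ when $G$ has an edge: deleting all but one vertex of a minimum vertex cover leaves a star plus isolated vertices. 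This is why the threshold $2$ is the first nontrivial one, in agreement with the statement $\ell = \vc(G) - \fvn(G) = 2$.)

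Given a bipartite instance $(G_0, k)$ with a fixed bipartition $(A_0, B_0)$: if $k \ge \vc(G_0)$ the answer is trivially \yes{} and I output a fixed \yes-instance with parameter $2$ (e.g.\ two disjoint triangles with $k = 2$). Otherwise set $s := |V(G_0)| - k + 3$ and build $G$ in two steps. First let $G_1 := G_0 \sqcup P_3$, the disjoint union of $G_0$ with a new path on vertices $u, v, w$ (edges $uv$ and $vw$), placing the endpoints $u, w$ on the ``$A$''-side; write $A_1 := A_0 \cup \{u, w\}$ and $B_1 := B_0 \cup \{v\}$. Second, add a set $I$ of $c := s - 2 - |B_1|$ new vertices forming an independent set, each adjacent to all of $A_1$; the resulting graph $G$ is bipartite with sides $A_1$ and $B_1 \cup I$. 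Using $k < \vc(G_0) \le |A_0|$ one checks $c = |A_0| - k \ge 1$ and that $B_1 \cup I$ is a maximum independent set of $G$, so $\alpha(G) = |B_1| + c = s - 2$. Finally output the \prob{Feedback Vertex Set below Vertex Cover} instance $(G, \vc(G) - 2)$, whose parameter is $\ell = \vc(G) - (\vc(G) - 2) = 2$.

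For correctness I would compute $f(G)$ by case analysis on how an induced forest $F$ of $G$ meets $I$. If $|F \cap I| \ge 2$, then $F$ contains at most one vertex of $A_1$ (otherwise $F$ induces a $K_{2,2}$), so $|F| \le 1 + |B_1| + |I| = \alpha(G) + 1$. If $|F \cap I| = 1$, the unique vertex of $F \cap I$ is adjacent to every vertex of $F \cap A_1$, so $F$ minus that vertex is an induced forest of $G_1$ in which each component meets $A_1$ in at most one vertex, giving $|F| \le f'(G_1) + 1$, where $f'(G_1)$ denotes the maximum order of such an ``$A_1$-spread'' induced forest of $G_1$. If $F \cap I = \emptyset$, then $|F| \le f(G_1)$. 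Hence $f(G) = \max\{f(G_1),\, f'(G_1) + 1,\, \alpha(G) + 1\}$. The disjoint $P_3$ is exactly what forces $f'(G_1) \le f(G_1) - 1$: $f'$ is additive over disjoint unions, while $P_3$ contributes $3$ to $f$ but only $2$ to $f'$, since its three vertices form a single tree containing both $A_1$-vertices $u$ and $w$. As $f(G_1) = f(G_0) + 3$ and $s - 3 = |V(G_0)| - k$, we get $f(G) \ge \alpha(G) + 2$ iff $f(G_1) \ge s$ iff $f(G_0) \ge |V(G_0)| - k$, i.e.\ $\fvn(G_0) \le k$; translating back through the duality finishes the proof. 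The step I expect to be the main obstacle is precisely the control of $f'$: if $f'(G_1)$ could equal $f(G_1)$, then a \no-instance in which $f(G_0)$ falls one short of the target would be turned into a \yes-instance by spending a single vertex of $I$, so one has to argue carefully that the added $P_3$ (or an equivalent gadget) rules this out while leaving $\alpha(G)$, $\vc(G)$, and the arithmetic of $s$ and $c$ unaffected.
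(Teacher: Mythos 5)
Your reduction is correct, but it takes a genuinely different route from the paper's. The paper reduces from \prob{Feedback Vertex Set} on \emph{general} graphs: it copies $G$, attaches a pendant leaf to every vertex, and adds $\abs{V(G)}-k-2$ vertices joined completely to $V(G)$; the pendant edges form a matching that pins $\vc(H)=\abs{V(G)}$ without any bipartiteness assumption, and the same leaves power a direct exchange argument on feedback vertex sets showing that all the universal vertices may be assumed to be deleted. You instead start from the (standard, correctly justified) NP-hardness of FVS on bipartite graphs, pass to the dual quantities $f=\abs{V}-\fvn$ and $\alpha=\abs{V}-\vc$, pad one side with an independent set joined to $A_1$ to set the threshold, and use a disjoint $P_3$ to open a gap of one between $f$ and the ``$A_1$-spread'' variant $f'$, which blocks the only way a single padding vertex could fake an extra unit of forest. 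I checked the one claim you leave as ``one checks'': $\vc(G)=\abs{A_1}$ does hold, via a matching saturating $A_1$ built from a maximum matching of $G_0$ (of size $\vc(G_0)\ge k+1$), the edge $uv$, and the $\abs{A_0}-k$ vertices of $I$ for $w$ and the unmatched part of $A_0$; this uses $k<\vc(G_0)$ exactly where you invoke it. The case analysis on $\abs{F\cap I}$ and the additivity of $f'$ over the disjoint union are sound, so $f(G)=\max\{f(G_1),\alpha(G)+1\}$ and the arithmetic closes. What each approach buys: the paper's is self-contained and shorter because the leaf gadget simultaneously fixes $\vc$ and rigidifies the solution structure, while your dual formulation makes the threshold bookkeeping transparent and isolates the ``no cheating by one vertex'' issue into the clean combinatorial fact $f'(P_3)=f(P_3)-1$, at the cost of needing bipartiteness to compute $\vc$ and a Hall-type verification.
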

\begin{proof}
	We reduce from \prob{Feedback Vertex Set}.
	Let~$I = (G, \instfvn)$ be an instance of the latter problem.
	We assume without loss of generality that~$G$ has at least one edge.
	Let~$\lambda = \abs{V(G)} - \instfvn - 2$.
	We construct a graph~$H$ with~$\vc(H) = \abs{V(G)}$ that contains a feedback vertex set of size at most~$k' = \abs{V(G)}-2$ if and only if~$G$ contains a feedback vertex set of size at most~$\instfvn$.
	Note that then~$\ell = \vc(H) - k' = 2$.

	The construction of~$H$ is as follows:
	Add a copy of~$G$ to~$H$.
	Attach a leaf to every vertex~$v \in V(G)$, that is, add a new vertex~$u_v$ and the edge~$\{ v,  u_v \}$ to~$H$.
	Add a set~$V^*$ of~$\lambda$ vertices to~$V(H)$ and make each $u \in V^*$ adjacent to every vertex in~$V(G)$.

	Suppose $G$ contains a feedback vertex set~$S$ of size at most~$\instfvn$.
	Then the set~$S \cup V^*$ is a feedback vertex set of size at most~$\instfvn+\lambda = \abs{V(G)}-2 = k'$ for~$H$, as~$H-(S \cup V^*)$ is isomorphic to the forest~$G-S$ with a leaf attached to every vertex.

	Conversely, suppose that~$H$ contains a feedback vertex set~$S'$ of size at most~$k'$.
	We claim that there exists a feedback vertex set of size at most~$k'$ in~$H$ that contains none of the leaves~$u_v$ and all vertices in~$V^*$.
	Clearly, if $S'$ contains a leaf~$u_v$ attached to some~$v \in V(G)$, then~$(S' \setminus \{u_v\}) \cup \{v\}$ is also a feedback vertex set of size at most~$\abs{V(G)}-2$ in~$H$.
	Hence, we may assume that $V^* \subseteq S' \subseteq V^* \cup V(G)$.

	Next, suppose that $V^* \setminus S' \ne \emptyset$.
	If $|V^* \setminus S'| \ge 2$, then $S'$ contains at least $|V(G)| - 1$ vertices of $V(G)$ since otherwise two vertices in $V^* \setminus S'$ and two vertices in $V(G)$ form a cycle of length four.
	Note, however, that $|S'| \le k' = |V(G)| - 2$.
	Thus, we may assume that $|V^* \setminus S'| \le 1$.
	Towards showing that $V^* \setminus S' = \emptyset$, suppose that there is a vertex~$w \in V^* \setminus S'$.
	Then, for every edge~$\{ u, v \} \in E(G)$, we have~$u \in S'$ or $v \in S'$, as otherwise~$u$, $v$, and~$w$ induce a cycle in~$H-S'$.
	Thus, $S' \setminus V^*$ is a vertex cover of $G$.
	Let $x \in S' \cap V(G)$ be arbitrary.
	Such a vertex exists by the assumption that $G$ has at least one edge.
	We claim that $S^* = (S' \setminus \{x\}) \cup \{w\}$ is a feedback vertex set for~$H$ of size at most~$k'$.
	If it is not, then $H-S^*$ has a cycle that contains $x$.
	Let $y$ and $z$ be two neighbors of $x$ in this cycle.
	Note that $y, z \in V(G) \setminus S'$.
	Since $S' \cap V(G)$ is a vertex cover of $G$, it follows that the neighborhoods of $y$ and $z$ in $H - S^*$ are $\{ x, u_y \}$ and $\{ x, u_z \}$, respectively.
	The vertices $u_y$ and $u_z$ have degree one, and thus we have a contradiction to the existence of the aforementioned cycle.
	Thus, the claim follows.
	Lastly, having a solution~$S'$ with~$V^* \subseteq S' \subseteq V^*\cup V(G)$ of size at most~$k'$ implies that~$S = S' \setminus V^*$ is a feedback vertex set of size at most~$\instfvn$ for~$G$.

	Finally, to show that~$\ell = \fvn(H)-\vc(H)=2$, we need to show that $\vc(H) = \fvn(H) + 2 = \abs{V(G)}$.
	As~$V(G)$ is a vertex cover of~$H$, we have~$\vc(H) \le \abs{V(G)}$.
	Since $\{ \{ v, u_v \} \mid v \in V(G) \}$ is a matching of size $|V(G)|$ in $H$, we have $\vc(H) \ge \abs{V(G)}$.
\end{proof}

\section{Conclusion}
The goal of this work is to extend the above guarantee paradigm in parameterized complexity beyond the previously considered lower bounds on vertex cover, namely the maximum matching size and the optimal LP relaxation solution.
We approached this issue by considering various structural graph parameters that are upper-bounded by the vertex cover number.
This work sketches a rough contour of the parameterized complexity landscape of these kinds of parameterizations of both \textsc{Vertex Cover} and \textsc{Feedback Vertex Set}.
It raises a number of immediate open questions, of which we highlight four:
\begin{enumerate}[(i)]
	\item Is \textsc{Vertex Cover above Treewidth} also fixed-parameter tractable on arbitrary graphs?
	We also leave this question open for \textsc{Feedback Vertex Set}.
	\item In \cref{sec:vc-above-fv}, we showed that \textsc{Vertex Cover above Feedback Vertex Number} is \Wone-hard.
	A natural question to ask is whether this problem is NP-hard for a constant parameter value or whether it is in \XP{}, that is, whether it can be decided by an algorithm with running time~$\bigO(n^{f(\ell)})$ for an arbitrary computable function $f$.
	\item One can naturally generalize graph parameters like feedback vertex number or cluster deletion number by fixing a graph class $\calF$ and defining the $\calF$-free deletion number of any graph $G$ as the size of a smallest set $X\subseteq V(G)$ such that $G-X$ does not contain any $H\in \calF$ as an induced subgraph.
	If $\calF$ contains a graph that is not edgeless, then vertex cover number upper-bounds the $\calF$-free deletion number.
	It would be interesting to find a graph class $\calF$ such that \textsc{Vertex Cover above $\calF$-Free Deletion} is FPT or to rule out the existence of such a class.
	We have only answered this question (always in the negative) if $\calF$ is any of the following classes: all cycles, $\{P_3\}$, all complete graphs.
	\item Moving beyond parameterized complexity, can graphs in which the difference parameters we have considered are small be characterized in an elegant way?
	For instance, one can easily prove that $\vc(G) = \fvs(G)$ if and only if $G$ is edgeless.
	We are not aware of any simple characterization of graphs where $\vc(G) - \fvs(G) =1$ or~$\vc(G) - \fvs(G) \leq c$ for a larger constant $c$.
	Such a characterization could be useful for answering~(ii).
\end{enumerate}

{
\begingroup
  \let\clearpage\relax
  \renewcommand{\url}[1]{\href{#1}{$\ExternalLink$}}
  \bibliography{strings-long,bibliography}
\endgroup
}

\end{document}